\newtheorem{theorem}{Theorem}
\newtheorem{lemma}[theorem]{Lemma}
\newtheorem{proposition}[theorem]{\bf Proposition}
\newtheorem{coro}[theorem]{Corollary}
\newcommand{\Rmnum}[1]{\expandafter\@slowromancap\romannumeral #1@}
\definecolor{myc1}{rgb}{0,0,1}
\begin{document}
	\title{{Joint User Scheduling and Computing Resource Allocation Optimization in Asynchronous Mobile Edge Computing Networks}}
	
	     \author{
		Yihan Cang, Ming Chen, Yijin Pan, Zhaohui Yang, \\Ye Hu, Haijian Sun, and Mingzhe Chen
		\vspace{-3em}
		\thanks{The work of Ming Chen was supported by the National Natural Science Foundation of China (NSFC) under grant 61871128, 61960206005 and 61960206006, by the National Key Research and Development Program under grant 2018YFB1801905, and by Fundamental Research on Foreword Leading Technology of Jiangsu Province under grant BK20192002. 
		The work of Y. Pan was supported by Chongqing Natural Science Joint Fund Project under Grant No. CSTB2023NSCQ-LZX0121. (Corresponding author: Ming Chen).} 
		\IEEEcompsocitemizethanks{
			\IEEEcompsocthanksitem Y. Cang, Ming Chen and Y. Pan  are with the National Mobile Communications Research Laboratory, Southeast University, Nanjing 210096, China (e-mails: yhcang@seu.edu.cn, chenming@seu.edu.cn,  panyj@seu.edu.cn).  Ming Chen is also with the Purple Mountain Laboratories, Nanjing 211100, China.
		\IEEEcompsocthanksitem
			Z. Yang is with College of Information Science and Electronic Engineering, Zhejiang University, Hangzhou 310027, China, and with International Joint Innovation Center, Zhejiang University, Haining 314400, China, and also with Zhejiang Provincial Key Laboratory of Info. Proc., Commun. \& Netw. (IPCAN), Hangzhou 310027, China (e-mail: yang\_zhaohui@zju.edu.cn).
   \IEEEcompsocthanksitem 
            Y. Hu is with the Department of Industrial and System Engineering, University of Miami, Coral Gables, FL, 33146, (e-mail: yehu@miami.edu).
		\IEEEcompsocthanksitem
		H. Sun is with the School of Electrical and Computer Engineering,
		The University of Georgia, Athens, GA 30602 USA (e-mail: hsun@uga.edu). 
  \IEEEcompsocthanksitem
			M. Chen is with the Department of Electrical and Computer Engineering and Institute for Data Science and Computing, University of Miami, Coral 			Gables, FL, 33146, USA (e-mail: mingzhe.chen@miami.edu). 
	}
	}

     \maketitle
	\begin{abstract}
    In this paper, the problem of joint user scheduling and computing resource allocation in asynchronous mobile 
    edge computing (MEC) networks is studied. In such networks, edge devices will offload their computational tasks to an MEC server, using the energy they harvest from this server. 
    To get their tasks processed on time using the harvested energy, edge devices will strategically schedule their task offloading, and compete for the computational resource at the MEC server.
    Then, the MEC server will execute these tasks asynchronously based on the arrival of the tasks. 
    This joint user scheduling, time and computation resource allocation problem is posed as an optimization framework whose goal is to find the optimal scheduling and allocation strategy that minimizes the energy consumption of these mobile computing tasks.
    To solve this mixed-integer non-linear programming problem, the general benders decomposition method is adopted which decomposes the original problem into a primal problem and a master problem. Specifically, the primal problem is related to computation resource and time slot allocation, of which the optimal closed-form solution is obtained. The master problem regarding discrete user scheduling variables is constructed by adding optimality cuts or feasibility cuts according to whether the primal problem is feasible, which is a standard mixed-integer linear programming problem and can be efficiently solved. By iteratively solving the primal problem and master problem, the optimal scheduling and resource allocation scheme is obtained. 
    Simulation results demonstrate that the proposed asynchronous computing framework reduces $87.17\%$ energy consumption 
    compared with conventional synchronous computing counterpart.
	\end{abstract}
	
	\begin{IEEEkeywords}
		Mobile edge computing, asynchronous computing, user scheduling, wireless power transfer.
	\end{IEEEkeywords}
	\section{Introduction}

Mobile edge computing (MEC) provides powerful computing ability to edge devices \cite{9113305,8016573}. Numerous works have investigated MEC systems from the perspective of resource allocation. In \cite{9756368}, computing offloading and service caching are jointly optimized in MEC-enabled smart grid to minimize system cost. Work \cite{10003098} proposed a reverse auction-based offloading and resource allocation scheme in MEC. With the aid of machine learning, a multi-agent deep deterministic policy gradient
(MADDPG) algorithm is designed to  maximize energy efficiency in \cite{9832640}.  However, deploying computing resources at edge servers of a wireless network faces several challenges. First, due to limited energy of edge servers, they may not be able to provide sufficient computation resource according to devices' requirements \cite{8488502,8543183}.  Second, executing all offloading tasks synchronously requires edge servers to wait the arrival of the task with maximum transmission delay which may not be efficient. Meanwhile, task scheduling sequence is nonnegligible in synchronous task offloading, which will also impact the network loads and task completion \cite{9580352}.

	
To address the first challenge, wireless power transfer (WPT)  technology that exploits energy carried by  radio frequency (RF) signals  emerges \cite{lu2014wireless}. 
Instead of using solar and wind sources, ambient RF signals 
can be a viable new source for energy scavenging. 
 Harvesting energy from the environment provides perpetual energy supplies to wireless 
devices for tasks offloading \cite{8357386}. Thus, WPT has been regarded as a promising paradigm for MEC 
scenarios. 
Combining WPT with MEC, 
the authors in \cite{8234686} proposed a multi-user wireless-powered MEC framework aiming at minimizing the total energy consumption under latency constraints. In \cite{8334188}, considering binary computation offloading, the weighted sum computation rate of all wireless devices was maximized by optimizing computation mode selection and transmission time allocation.  The work in \cite{9881553} proposed a multiple intelligent reflecting surfaces (IRSs) assisted wireless powered MEC system, where the IRSs are deployed to assist both the downlink WPT from the  access point (AP) to the wireless devices and the uplink computation offloading. 
However, the above works \cite{8264794,7997360,8249785,8986845,8234686,8334188,9881553} assumed that all computational tasks offloaded by users will arrive at the server at the same time and then the server starts to process all tasks simultaneously,  which is not efficient and even impractical due to users' dynamic computational task processing requests 
 \cite{xu2020task}.

Currently, only a few existing works \cite{xu2020task,dai2019delay,8468240,9924248,9292432,9201170} optimized MEC networks under dynamic computation requests. 
The work in \cite{xu2020task} designed a Whittle index based offloading algorithm to maximize the long-term reward for asynchronous MEC networks where computational tasks arrive randomly. 
In  \cite{dai2019delay}, the authors studied the co-channel interference caused by asynchronous task uploading in  NOMA MEC systems. The work in \cite{8468240} investigated the energy efficient asynchronous task offloading for a MEC system where computational tasks with various latency requirements arrive at different time slots. 
Task scheduling problem for MEC systems with task interruptions and insertions was studied in \cite{9924248}.
However, the above works \cite{xu2020task,dai2019delay,8468240,9924248} that focused on the asynchronous task offloading neglected how the asynchronous task arrival affects the computation at the MEC server. 
	The authors in \cite{9292432} used a sequential computation method to solve the energy consumption minimization problem under asynchronous task arrivals. The work in \cite{9201170} designed a computation strategy that only allows a task to be executed after the completion of the previous tasks. Yet, works in \cite{9292432} and \cite{9201170} are still constrained by their limited usage of the server computation capacity, and cannot act as resource efficient asynchronous task offloading solutions. 
 
 The sequential computation strategy \cite{8692421} has shown to have the potential to improve the computation resource efficiency and  task execution punctuality in an asynchronous MEC network.  
However, since the computation resource allocation at the server depends on the arrival of the offloaded tasks, the sequential scheduling of the tasks will inevitably affect system performance, which is a fact that has been wildly ignored \cite{9292432,8952620,9632276}.

The main contribution of this paper is a novel asynchronous MEC framework that jointly schedules tasks and allocates computation resource with optimized system energy efficiency. 
In brief, our key contributions include:
	\begin{itemize}
	
	\item 
 We develop a novel framework to manage computation resource for the sequential computation in  asynchronous MEC networks. In particular, we consider a MEC network in which the edge devices sequentially harvest energy for transmission,  offload their computational tasks to a MEC server, and then compete for computation source at the server to get their tasks accomplished. To achieve the high energy efficient task execution, a policy
needs to be designed for determining the optimal task scheduling sequence, time  and computational  resource allocation. We pose this
joint scheduling and resource allocation problem in an optimization framework and seek to find the strategy which minimizes the energy consumption of the tasks.

	\item Then, a general benders decomposition (GBD) based algorithm is proposed to solve the formulated mixed-integer non-linear programming (MINLP) problem  which is decomposed into a primal problem that allocates computation resource and time, and a master problem that schedules user tasks. 
By iteratively solving  the primal problem and master problem, the optimal scheduling and resource allocation scheme is obtained.

	\item To show the effectiveness of the proposed algorithm, we prove that the optimal energy efficient scheduling and resource allocation scheme also optimizes the task punctuality. Our analytical results also show that the optimal allocation scheme for a given offloading task follows a specific pattern: the computation frequency allocated to each task remains constant initially, then gradually decreases before eventually reaching zero. Notably, all tasks experience a simultaneous decrease, the time of which is given in a closed form, in terms of their required central processing unit (CPU) cycles. 
 Leveraging these identified properties, we introduce a computation resource allocation algorithm that offers a low-complexity solution.

\end{itemize} 
	Simulation results demonstrate
that the proposed asynchronous computing framework reduces
$87.87\%$ energy consumption 
compared with
conventional synchronous computing counterpart. Moreover, computational  complexity of the proposed computation resource allocation algorithm is reduced by $100$ times compared with conventional interior point method.


The rest of this paper is structured as follows. Section II elaborates  system model and problem formulation. In Section III, we investigate the properties of asynchronous frequency allocation with given time allocation and user scheduling. The joint optimization of user scheduling, time allocation, and computation resource  allocation  is rendered in Section IV. Simulation results are presented in Section~V. Finally, Section VI draws the conclusions.

	\section{System Model and Problem Formulation}
		\begin{figure}[t]
		\centering
		\includegraphics[width=0.5\textwidth]{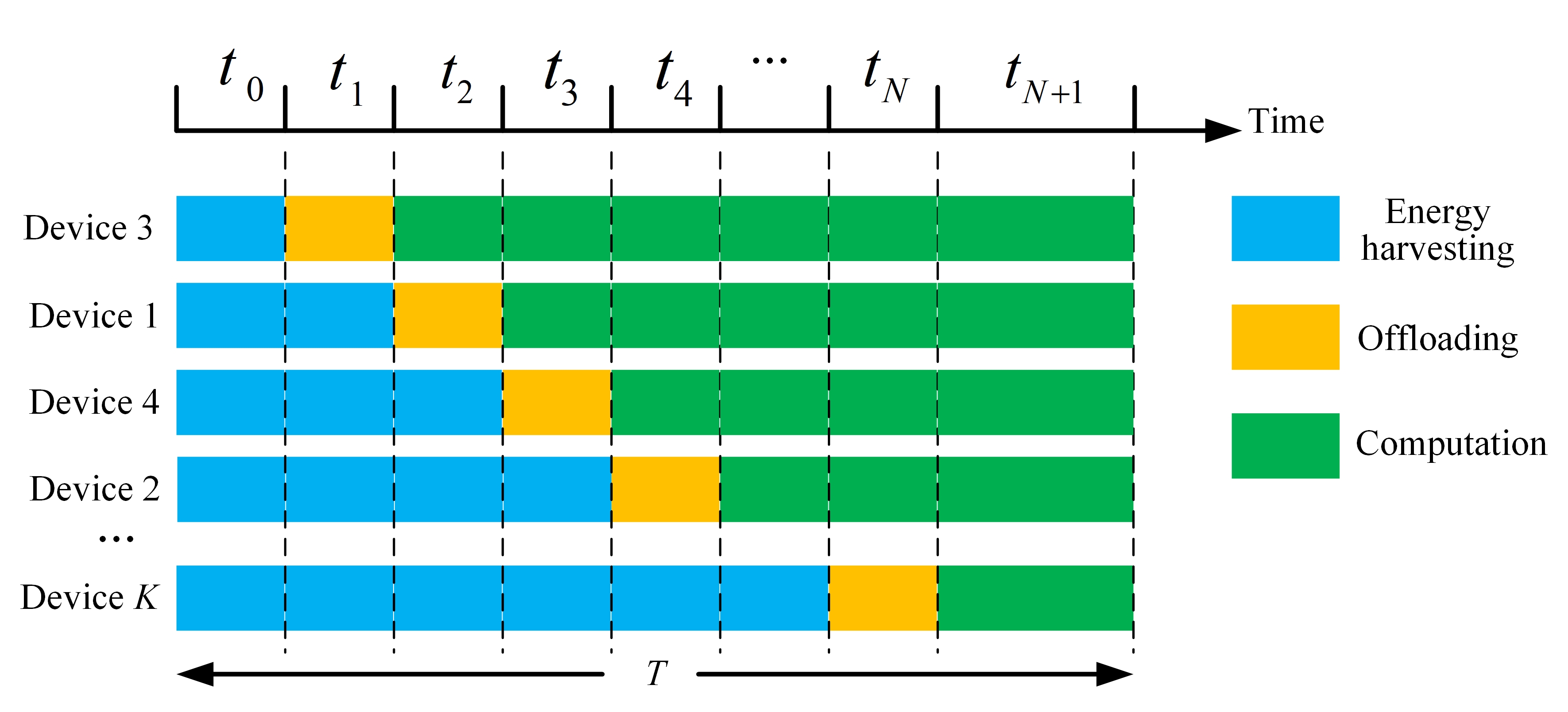}
		\caption{An illustration about the flow chart of the ordered TDMA system with asynchronous computing.} \label{time}
		\vspace{-1.em}
	\end{figure}

 Consider a MEC network consisting of one MEC server, and a set $\mathcal{K}=\{1,2,\cdots,K\}$ of energy harvesting enabled edge devices.  Within this network, each device $k\in\mathcal{K}$ needs to execute an $A_k$ bits computational task, and will offload its  computational task to the
 MEC server. As shown in Fig.~\ref{time}, the devices need to first harvest energy from the server to enable such offloading. Then, using the time division multi-access (TDMA) technique, the devices need to schedule their offloading toward the MEC server. In other words, the computational tasks offloaded by devices will arrive at the MEC server asynchronously. To this end, the MEC server will process each device’s computational task in an asynchronous manner. In particular, the server will process devices’ computational tasks according to the time that it receives each of these computational tasks. 
 




The server and edge devices must complete their computational tasks within a time period $T$ which is divided into $\left(K+2\right)$ time slots. The duration of each time slot $n\in\{0,\cdots,N+1\}$ is  represented by $\Delta t_n$, with $N=K$.  
Each device uses one time slot to offload its computational task. 
  Let $a_{k,n}$ be the index to indicate whether device $k$ offloads its task to the server at 
 time slot $t_n$. In particular, if device $k$ uses time slot $t_n$ to offload its computational task, we have $a_{k,n}=1$; otherwise, $a_{k,n}=0$. 
Since each device uses only one time slot and each time slot can only be allocated to one device, we have 
 $\sum_{k=1}^K a_{k,n}=1,\  \forall n\in\{1,\cdots,N\}$, and $\sum_{n=1}^N a_{k,n}=1,\  \forall k\in\mathcal{K}$.
Meanwhile, when $a_{k,n}=1$, device $k$ will harvest energy from time slot $t_0$ to $t_{n-1}$. 
Once task $k$ arrives at the MEC server, i.e., at time slot slot $t_{n+1}$,  the server will process this computational task. 
 
 The task computation process of the server and a device jointly completing a computational task $k$ consists of three stages: 1) energy harvesting, 2) task offloading, and 3) remote computing. 
 Next, we first introduce the process of the energy harvesting, task offloading, and remote computing stages. Then, the problem formulation is given.

 \subsection{Energy Harvesting Model}
  The path loss model is given by $\bar{h}_k=  A\left(\frac{c}{4\pi f_c d_k}\right)^{\ell}$, where $A$  represents antenna gain, $c$ denotes the speed of light,  $f_c$ is the carrier frequency, $\ell$ denotes the path-loss factor, and $d_k$ represents the distance between device $k$ and the server \cite{9449944}. The instant channel gain between device $k$ and server  denoted by $h_k$, follows an  i.i.d. Rician distribution with line-of-sight (LoS)	link gain equal to $\gamma\bar{h}_k$, where $\gamma$ is Rician factor. 
If device $k$ offloads its task at time slot $t_n$ (i.e., $a_{k,n}=1$), the harvested energy of device $k$ is $E_k^H=\sum_{i=0}^{n-1}\Delta{t_i}h_k\eta P_0$, where $\eta$ is the energy harvesting efficiency of each device, which is assumed to be equal for all devices \cite{8960510}. $P_0$ denotes the transmit power of the server. 
Since each device has only a single time slot for task offloading (i.e., there exists only one $n\in\mathcal{N}$ such that $a_{k,n}=1$ for a certain device $k$), the energy harvested by device $k$ can be reformulated by 
$E_{k}^H=\sum_{n=1}^K\sum_{i=0}^{n-1}a_{k,n}\Delta{t_i}h_k\eta P_0$ $ (\forall k\in\mathcal{K})$.
	
	\subsection{Tasks Offloading Model}
	Based on the monomial offloading power model \cite{8468240,9140412}, the transmit power of device $k$ at its offloading time slot $t_n$ is
	\begin{align}
		p_{k,n}=\frac{\lambda(r_{k,n})^3}{h_k}=\frac{\lambda(A_k)^3}{h_k(\Delta t_n)^3}, \forall k\in\mathcal{K},\forall n\in\{1,\cdots,N\}, 
	\end{align}
	where $r_{k,n}=A_k/\Delta t_n$ is the transmission rate, 
 $\lambda>0$ is the energy coefficient related to the bandwidth and the noise power, and the order $3$ is the monomial order associated with coding scheme.
	Since the transmit power of devices comes from harvested energy, we have 
	$	\sum_{n=1}^{N}a_{k,n}\Delta t_np_{k,n}\leq E_k^H$ $(\forall k\in\mathcal{K})$.



	\subsection{Computing Model}
 		\begin{figure}[t]
		\centering
		\includegraphics[width=0.45\textwidth]{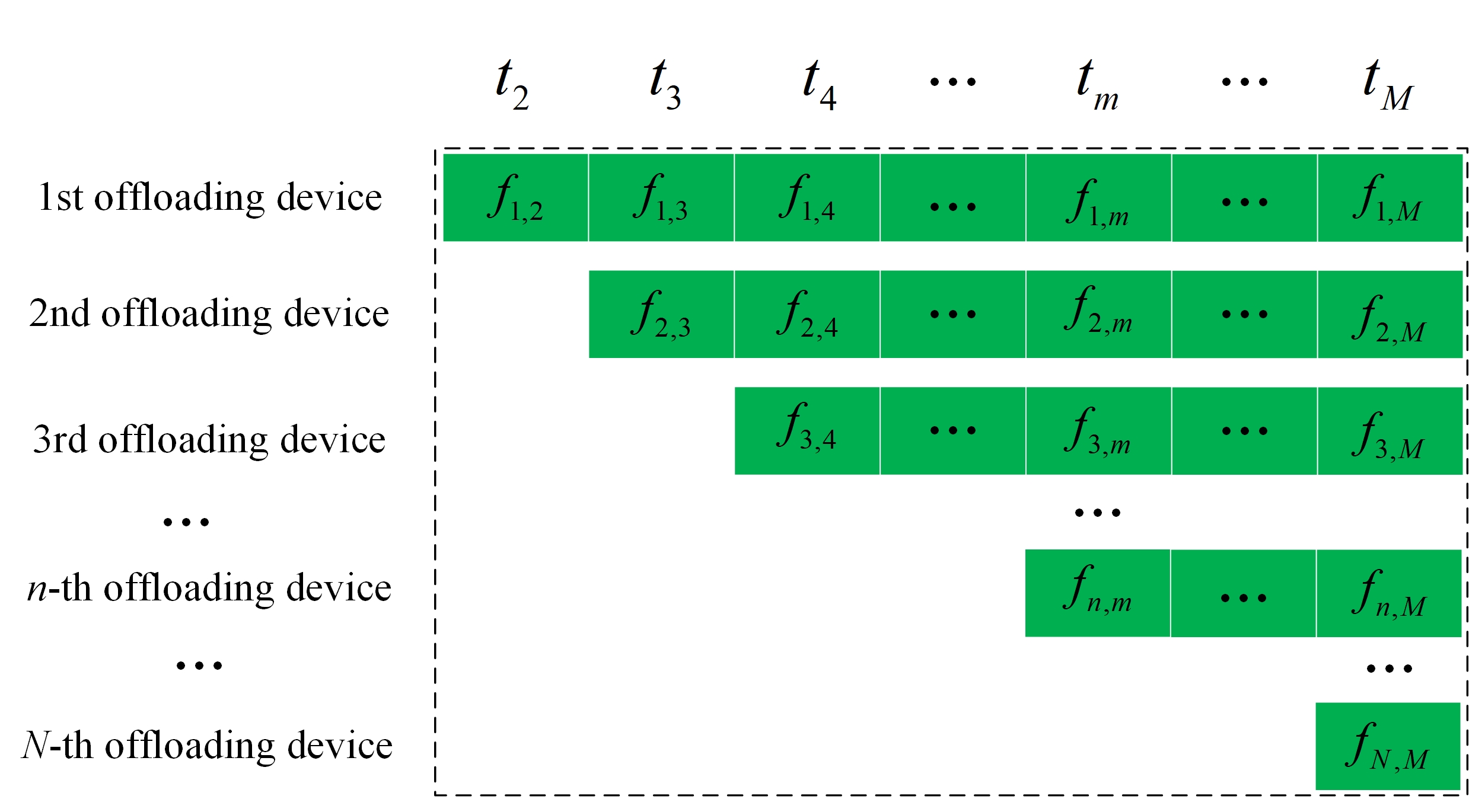}
		\caption{Asynchronous computation resource allocation.} \label{fig_fnm}
	\end{figure}
The MEC server is equipped with multiple CPUs such that the computational tasks offloaded from different devices can be executed in parallel \cite{8964328}. Let $I_k$ be the computation intensity of task $k$ in terms of CPU cycles per bit. As shown in Fig.~\ref{fig_fnm}, to sufficiently utilize asynchronous computing, the computation resource of the server will be  reallocated to the computational tasks offloaded from devices at each time slot from $t_2$ to $t_{N+1}$. 
Intuitively, the first uploading task can occupy  the whole computation capacity of the server before the second offloading task arrives,  while all the tasks compete for computation resource at time slot $t_{N+1}$.
At an arbitrary time slot $m \in\{2,\cdots,N+1\}$, $f_{n,m}$, $\forall n\in\{1,\cdots,m-1\}$ is set to be the computation resource  allocated to the task that arrives at the server at time slot $t_n$. 
Given these definitions, we have 
$\sum_{n=1}^{m-1} f_{n,m}\leq F_{\max}$ $(\forall m = 2, \cdots, N+1)$, 
where $F_{\max}$ represents the maximum computation capacity of the MEC server. To complete the task computation for each device $k$, we have
$\sum_{n=1}^N \sum_{m=n+1}^{N+1} a_{k,n}f_{n,m}\Delta t_m \geq F_k$ $(\forall k\in\mathcal{K})$, 
where $F_k=A_kI_k$ represents the computation cycles of device $k$.  
Besides, the energy consumption  of the MEC server for all tasks computation  can be formulated by
$E_{MEC}=\sum_{n=1}^N \sum_{m=n+1}^{N+1}  \kappa f_{n,m}^3\Delta t_m$, 
where $\kappa$ denotes the energy coefficient of the MEC server.

\subsection{Problem Formulation}
Our goal is to minimize the MEC server's energy consumption of completing the   tasks offloaded by all devices, which is formulated as an optimization problem as
\begin{subequations} \label{P1}
	\begin{align}
		\min_{\Delta\pmb{t},\pmb{A},\pmb{f}}\  &\sum_{n=1}^N \sum_{m=n+1}^{N+1}  \kappa f_{n,m}^3\Delta t_m,\tag{\ref{P1}}\\
		\text{s.t.}\   
		&\sum_{n=1}^{N}a_{k,n} \frac{\lambda(A_{k})^3}{h_k(\Delta t_n)^2}\leq \sum_{n=1}^N\sum_{i=0}^{n-1}a_{k,n}\Delta t_ih_k\eta P_0,  \forall k\in\mathcal{K},\\
		&\sum_{n=1}^{m-1} f_{n,m}\leq F_{\max}, \quad \forall m = 2, \cdots, N+1,\\
		&\sum_{n=1}^N \sum_{m=n+1}^{N+1} a_{k,n}f_{n,m}\Delta t_m \geq A_kI_k, \quad \forall k\in\mathcal{K},\\
		&\sum_{i=0}^{N+1}\Delta t_i\leq T,\\
		&\sum_{k=1}^K a_{k,n}=1,\quad \forall n\in\{1,\cdots,N\},\\
		&\sum_{n=1}^N a_{k,n}=1,\quad \forall k\in\mathcal{K},\\
		&a_{k,n}\in\{0,1\}, \quad \forall k\in\mathcal{K},\forall n\in\{1,\cdots,N\},
	\end{align}
\end{subequations}
where $\Delta\pmb{ t}=[\Delta t_0,\cdots,\Delta t_{N+1}]^T$,  $\pmb{f}=(f_{n,m})_{\forall n\in\{1,\cdots,N\}, m \in\{n+1,\cdots,N+1\}}$,  and  $\pmb{A}=(a_{k,n})_{K\times N}$.
In (\ref{P1}), (\ref{P1}a) is energy consumption causality constraint; (\ref{P1}b) represents a computational resource allocation constraint; 
(\ref{P1}c) ensures the completion of task computing; 
(\ref{P1}d) implies that the execution time of all devices should be less than $T$;  (\ref{P1}e)-(\ref{P1}g) are user scheduling  constraints. 
 Since the discrete user scheduling variables $a_{k,n}$ and continuous resource allocation variables $\Delta t_i$, $f_{n,m}$ are highly coupled,  problem \eqref{P1} is a standard MINLP problem which is difficult to solve.  To handle this issue, we first analyze the optimal computation resource allocation with given user scheduling and time allocation in Section III,  based on which an efficient  low-complexity computation frequency optimization algorithm is proposed. Finally, in Section IV, we propose a GBD-based algorithm to jointly optimize user scheduling and resource allocation so as to solve problem \eqref{P1}. \footnote{ For multi-server edge computing systems, new indicator variables can be introduced to denote the association between tasks and servers.  
		Then the energy minimization problem can be  formulated as a MINLP problem containing two kinds of binary optimization variables for task-server association and  scheduling sequence, respectively. Despite being more complex, the problem can be solved efficiently using conventional MINLP  methods such as convex relaxation and branch-and-bound, or latest approach using machine learning (see e.g.,  \cite{9449944}). It is worth noting that with given task-server association, the proposed algorithm in this work is still applicable to scheduling and resource allocation optimization for each server. The detailed transmission protocol and  algorithm procedure are left for future works. } 

\section{Analysis and Algorithm of the Optimal Computation Resource Allocation}
In this section, we first analyze the  properties of the optimal computation resource allocation,  and then a low-complexity computation resource allocation algorithm is accordingly proposed. For ease of notation, we use $F_n$ $(\forall n\in\{1,\cdots,N\})$ to represent the computation cycles to complete the task that arrives at the server with order $n$. With given time slot allocation vector $\Delta \pmb{t}$ and user scheduling matrix $\pmb{A}$, problem \eqref{P1} is simplified as follows: 
\begin{subequations} \label{theorem4_eq1}
	\begin{align}
		\min_{\{f_{n,m}\}}\  &\sum_{n=1}^N \sum_{m=n+1}^{N+1}  \kappa f_{n,m}^3\Delta t_m,\tag{\ref{theorem4_eq1}}\\
		s.t.\ \   
		&\sum_{n=1}^{m-1} f_{n,m}\leq F_{\max}, \quad \forall m = 2, \cdots, N+1,\\
		&\sum_{m=n+1}^{N+1} f_{n,m}\Delta t_m \geq F_{n}, \quad \forall n\in\{1,\cdots,N\},\\
		&f_{n,m}\geq 0,\forall n\in\{1,\cdots,N\},\forall m = n+1, \cdots, N+1.
	\end{align}
\end{subequations}
Before solving problem \eqref{theorem4_eq1}, we provide the feasibility  condition as follows. 

\begin{proposition} \label{proposition_1}
	\emph{Problem \eqref{theorem4_eq1} is feasible if and only if $F_{\max}\geq \max_{n\in\{1,\cdots,N\}} \frac{\sum_{i=n}^NF_i}{\sum_{i=n+1}^{N+1}\Delta t_{i+1}}$.}
\end{proposition}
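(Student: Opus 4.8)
The plan is to establish the two implications separately: necessity by aggregating the per‑task completion constraints, and sufficiency by an explicit backward‑greedy construction of a feasible allocation.

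For \textbf{necessity}, I would fix an index $n\in\{1,\dots,N\}$, take any feasible point $\{f_{n,m}\}$ of \eqref{theorem4_eq1}, sum the completion constraints $\sum_{m=i+1}^{N+1} f_{i,m}\Delta t_m\ge F_i$ over the ``tail'' tasks $i=n,\dots,N$, and interchange the order of summation to obtain
\[
\sum_{i=n}^{N}F_i\;\le\;\sum_{i=n}^{N}\sum_{m=i+1}^{N+1}f_{i,m}\Delta t_m\;=\;\sum_{m=n+1}^{N+1}\Delta t_m\sum_{i=n}^{m-1}f_{i,m}\;\le\;F_{\max}\sum_{m=n+1}^{N+1}\Delta t_m ,
\]
where the interchange is legitimate because every $m$ in the outer range satisfies $m-1\le N$, and the last inequality uses $\sum_{i=n}^{m-1}f_{i,m}\le\sum_{i=1}^{m-1}f_{i,m}\le F_{\max}$, i.e.\ the per‑slot capacity constraint. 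Dividing by the (strictly positive) total duration $\sum_{m=n+1}^{N+1}\Delta t_m$ and then maximizing over $n$ reproduces the inequality in the statement.

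For \textbf{sufficiency}, I would assume $F_{\max}\sum_{m=n+1}^{N+1}\Delta t_m\ge\sum_{i=n}^{N}F_i$ for every $n$ and construct a feasible $\{f_{n,m}\}$ by assigning the tasks in \emph{reverse arrival order} $n=N,N-1,\dots,1$. When task $n$ is reached, tasks $N,\dots,n+1$ have already been assigned so that each exactly meets its requirement; the residual capacity at slot $m$, namely $F_{\max}-\sum_{i=n+1}^{m-1}f_{i,m}$, is nonnegative by this inductive construction, so the demand $F_n$ may be spread over task $n$'s usable slots $m=n+1,\dots,N+1$ provided the amount placed in slot $m$ does not exceed $(F_{\max}-\sum_{i=n+1}^{m-1}f_{i,m})\Delta t_m$. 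The key observation is that a later‑assigned task $i>n$ can only occupy slots $m\ge i+1\ge n+2$, so the cycles already committed inside the window $\{n+1,\dots,N+1\}$ total exactly $\sum_{i=n+1}^{N}F_i$; hence the total residual room available to task $n$ within that window equals $F_{\max}\sum_{m=n+1}^{N+1}\Delta t_m-\sum_{i=n+1}^{N}F_i$, which is at least $F_n$ by hypothesis. Since any nonnegative demand bounded by the total residual capacity of its slots can be packed into them, task $n$ is placed without violating a per‑slot capacity, and once $n=1$ has been handled the resulting point satisfies all constraints of \eqref{theorem4_eq1}.

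The step I expect to need the most care is the sufficiency bookkeeping: one must verify that the allocations of the already‑assigned tasks lie entirely within the slot window $\{n+1,\dots,N+1\}$, which is exactly the structural fact that collapses the general bipartite transportation‑feasibility condition (one inequality per subset of tasks, obtainable from max‑flow/min‑cut or a Hall‑type deficiency argument) down to just the $N$ ``tail'' inequalities in the statement. An alternative, less constructive route would be to model \eqref{theorem4_eq1} as a flow‑feasibility problem on the task--slot bipartite network and invoke the Gale--Hoffman circulation theorem; the explicit greedy above is self‑contained and additionally suggests an initialization for the low‑complexity resource‑allocation algorithm developed next in this section.
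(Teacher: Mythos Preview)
Your proof is correct, but it proceeds along a different line from the paper. The paper does not split into necessity and sufficiency; instead it recasts feasibility of \eqref{theorem4_eq1} as the auxiliary min--max problem
\[
\min_{\{f_{n,m}\ge 0\}}\ \max_{m}\ \sum_{n=1}^{m-1}f_{n,m}\quad\text{s.t.}\quad \sum_{m=n+1}^{N+1}f_{n,m}\Delta t_m\ge F_n,
\]
argues that \eqref{theorem4_eq1} is feasible iff this optimal value is at most $F_{\max}$, and then computes that optimal value by induction on $K$: the cases $K=1$ and $K=2$ are worked out by hand, and the general formula $\max_n \sum_{i\ge n}F_i\big/\sum_{m\ge n+1}\Delta t_m$ is asserted ``by recursion''.

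Your route is arguably cleaner. The necessity direction via summing the tail completion constraints and swapping the order of summation is a one-line argument that the paper does not make explicit (it is hidden inside the min--max reformulation). For sufficiency, your backward-greedy construction is a genuine alternative to the paper's inductive case analysis: it is fully self-contained, the key bookkeeping step (already-placed tasks $i>n$ live entirely in slots $m\ge n+2$, so the residual room in the window $\{n+1,\dots,N+1\}$ is exactly $F_{\max}\sum_{m\ge n+1}\Delta t_m-\sum_{i>n}F_i$) is correct, and it yields an explicit feasible point rather than just an optimal value. The paper's approach, on the other hand, identifies the minimum required $F_{\max}$ directly and gives a bit more structural intuition about when a task ``spills over'' into the next task's window, which foreshadows the transition-point analysis of Theorem~\ref{theorem_5}; but its recursion step is only sketched, so your argument is the more rigorous of the two.
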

\begin{proof}
	Please refer to Appendix \ref{proof_proposition_1}.
\end{proof}

Denote $\{\alpha_m\}$, $\{\beta_n\}$, and $\{\gamma_{n,m}\}$ 
as the non-negative Lagrangian multipliers associated with
the maximum frequency constraints (\ref{theorem4_eq1}a), task computation completion
constraints  (\ref{theorem4_eq1}b) and non-negative frequency constraints  (\ref{theorem4_eq1}c),
respectively. The optimal computation resource allocation is given by the following proposition. 

\begin{proposition}
  \label{theorem_10}
	\emph{
	Given the optimal $\{\alpha^*_{m}\}$, $\{\beta^*_n\}$, the optimal solution of problem \eqref{theorem4_eq1} is given by
	\begin{align} \label{equation4}
		f^*_{n,m}=\sqrt{\left[\frac{\beta^*_n}{3\kappa}-\frac{\alpha^*_m}{3\kappa \Delta t_m}\right]^+},\forall n\in\mathcal{K},\forall m\in\{2,\cdots,K+1\}. 
	\end{align}}
\end{proposition}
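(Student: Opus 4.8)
The plan is to obtain the claimed formula directly from the Karush--Kuhn--Tucker (KKT) conditions of problem \eqref{theorem4_eq1}. First I would note that \eqref{theorem4_eq1} is a convex program: each summand $\kappa f_{n,m}^3\Delta t_m$ is convex on $\{f_{n,m}\ge 0\}$ because $\kappa,\Delta t_m>0$, so the objective is convex, while constraints (\ref{theorem4_eq1}a)--(\ref{theorem4_eq1}c) are all affine. Since the constraints are affine, a Slater-type constraint qualification holds whenever the problem is feasible---which, by Proposition~\ref{proposition_1}, is guaranteed under the stated bound on $F_{\max}$---so the KKT conditions are necessary and sufficient for global optimality. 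A minimizer exists because the objective is continuous and coercive (it tends to $+\infty$ as $\|\pmb f\|\to\infty$) on the nonempty closed feasible set; hence an optimal primal--dual tuple $\big(\pmb f^*,\{\alpha_m^*\},\{\beta_n^*\},\{\gamma_{n,m}^*\}\big)$ exists.

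Next I would write the Lagrangian
\begin{align*}
L=&\sum_{n=1}^{N}\sum_{m=n+1}^{N+1}\kappa f_{n,m}^3\Delta t_m+\sum_{m=2}^{N+1}\alpha_m\Big(\sum_{n=1}^{m-1}f_{n,m}-F_{\max}\Big)\\
&+\sum_{n=1}^{N}\beta_n\Big(F_n-\sum_{m=n+1}^{N+1}f_{n,m}\Delta t_m\Big)-\sum_{n=1}^{N}\sum_{m=n+1}^{N+1}\gamma_{n,m}f_{n,m},
\end{align*}
and impose stationarity $\partial L/\partial f_{n,m}=0$ at the optimum, i.e.
\begin{align*}
3\kappa\,f_{n,m}^{*2}\Delta t_m+\alpha_m^*-\beta_n^*\Delta t_m-\gamma_{n,m}^*=0,
\end{align*}
which rearranges to $f_{n,m}^{*2}=\frac{\beta_n^*}{3\kappa}-\frac{\alpha_m^*}{3\kappa\Delta t_m}+\frac{\gamma_{n,m}^*}{3\kappa\Delta t_m}$.

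The last step is a short case split driven by complementary slackness $\gamma_{n,m}^*f_{n,m}^*=0$ together with dual feasibility $\gamma_{n,m}^*\ge 0$. If $\frac{\beta_n^*}{3\kappa}-\frac{\alpha_m^*}{3\kappa\Delta t_m}>0$, then $\gamma_{n,m}^*>0$ would force $f_{n,m}^*=0$, and substituting this into the stationarity relation gives $\gamma_{n,m}^*=\alpha_m^*-\beta_n^*\Delta t_m<0$, a contradiction; hence $\gamma_{n,m}^*=0$ and $f_{n,m}^*=\sqrt{\frac{\beta_n^*}{3\kappa}-\frac{\alpha_m^*}{3\kappa\Delta t_m}}$. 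If instead $\frac{\beta_n^*}{3\kappa}-\frac{\alpha_m^*}{3\kappa\Delta t_m}\le 0$, then $f_{n,m}^*>0$ would force $\gamma_{n,m}^*=0$ and thus $f_{n,m}^{*2}\le 0$, which is impossible, so $f_{n,m}^*=0$, with $\gamma_{n,m}^*=\alpha_m^*-\beta_n^*\Delta t_m\ge 0$ consistent with dual feasibility. Merging the two branches yields precisely $f^*_{n,m}=\sqrt{\big[\frac{\beta^*_n}{3\kappa}-\frac{\alpha^*_m}{3\kappa\Delta t_m}\big]^+}$.

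I do not expect a real obstacle here; this is essentially a textbook KKT computation. The only point needing care is the case analysis above: the $[\cdot]^+$ truncation must be \emph{derived} from complementary slackness and the sign of the eliminated multiplier $\gamma_{n,m}^*$, rather than merely posited. It is also worth stressing that the statement only characterizes $\pmb f^*$ in terms of the still-unknown optimal multipliers $\{\alpha_m^*\}$ and $\{\beta_n^*\}$; pinning those down explicitly---and hence deriving the ``constant, then decreasing, then zero'' frequency profile announced in the introduction---is left to the subsequent analysis and is outside the scope of this proposition.
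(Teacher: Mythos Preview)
Your proposal is correct and follows exactly the approach the paper indicates: the paper states that \eqref{equation4} ``can be effectively obtained by solving Karush--Kuhn--Tucker (KKT) and Slater conditions'' and omits the details, which are precisely the convexity check, Lagrangian, stationarity, and complementary-slackness case split you carry out. Your derivation is simply a fleshed-out version of what the paper leaves implicit.
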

\begin{proof}
Since \eqref{equation4} can be effectively  obtained by solving Karush-Kuhn-Tucker (KKT)  and Slater conditions, the proofs is omitted here. 
\end{proof}

According to Propostion~\ref{theorem_10}, we can use the sub-gradient method to obtain the optimal $\{\alpha^*_m\} $ and $\{\beta^*_n\}$ so as to acquire the optimal computation resource allocation.   To further reduce the computational complexity and provide some design insights, the properties of the optimal solution of problem \eqref{theorem4_eq1} are summarized in the following theorem.

\begin{theorem}
    \emph{
    Denote  $\digamma(i)=\sum_{n=1}^{i-2}\frac{F_n}{\sum_{m=n+1}^{K+1}\Delta t_m} + \frac{\sum_{n=i-1}^K F_n}{\sum_{m=i}^{K+1}\Delta t_m}$ $(2\leq i\leq K+1)$. 
 The optimal computation resource has the following properties:
    \begin{itemize}\label{theorem_5}  
        \item[1)] The optimal solution of problem \eqref{theorem4_eq1} satisfies   $f_{n,n+1}^*=\cdots= f_{n,i}^*>\cdots>f^*_{n,j}=\cdots= f_{n,K+1}^*=0,$  $(n+1\leq i < j\leq K+1)$, where $t_i$ is referred as ``transition point". 
        \item[2)] The optimal $\{\alpha^*_m\} $ satisfies $0=\frac{\alpha^*_2}{\Delta t_2}=\cdots=\frac{\alpha^*_i}{\Delta t_i}<\frac{\alpha^*_{i+1}}{\Delta t_{i+1}}\cdots<\frac{\alpha_{K+1}^*}{\Delta t_{K+1}}$.
        \item[3)] The transition point is $t_i$ $(3\leq i\leq K+1)$ if and only if 
$\digamma(i-1)\leq F_{\max} < \digamma(i)$. 
    \end{itemize}
    }
\end{theorem}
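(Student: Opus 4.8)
\emph{Overall plan.} Everything runs off the closed form of Proposition~\ref{theorem_10}, $f^*_{n,m}=\sqrt{\big[\tfrac{\beta^*_n}{3\kappa}-\tfrac{\alpha^*_m}{3\kappa\,\Delta t_m}\big]^+}$, together with complementary slackness for \eqref{theorem4_eq1}. First I would record two preliminaries. (i) Every completion constraint (\ref{theorem4_eq1}b) is active, so $\beta^*_n>0$ for all $n$: if $\beta^*_n=0$ then $f^*_{n,m}=0$ for every $m$, making (\ref{theorem4_eq1}b) fail since $F_n>0$. (ii) Put $\psi_m:=\alpha^*_m/\Delta t_m\ge 0$; then $f^*_{n,m}=\sqrt{[\tfrac{\beta^*_n-\psi_m}{3\kappa}]^+}$ depends on the slot index only through $\psi_m$ and is nonincreasing in $\psi_m$. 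So all three items reduce to controlling the sequence $\psi_2,\dots,\psi_{K+1}$.

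\emph{Parts 1 and 2.} I claim $\psi_2\le\cdots\le\psi_{K+1}$, with strict inequality wherever $\psi_m>0$. Weak monotonicity: if $\psi_m>\psi_{m+1}$ then $\alpha^*_m>0$, so (\ref{theorem4_eq1}a) is tight at $m$ and $\sum_{n\le m-1}f^*_{n,m}=F_{\max}$; but $\psi_m>\psi_{m+1}$ gives $f^*_{n,m}\le f^*_{n,m+1}$ for each $n\le m-1$, hence $\sum_{n\le m}f^*_{n,m+1}\ge F_{\max}$, and (\ref{theorem4_eq1}a) at $m+1$ then forces $f^*_{m,m+1}=0$ and $f^*_{n,m+1}=f^*_{n,m}$ for all $n\le m-1$; equality of $f^*_{n,m}$ and $f^*_{n,m+1}$ under $\psi_{m+1}<\psi_m$ can only hold at the floor of $[\cdot]^+$, so $f^*_{n,m}=0$ for all $n\le m-1$, contradicting $\sum_{n\le m-1}f^*_{n,m}=F_{\max}>0$. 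Strictness: if $\psi_m=\psi_{m+1}>0$, then (\ref{theorem4_eq1}a) is tight at both slots and $f^*_{n,m}=f^*_{n,m+1}$ for $n\le m-1$, forcing $f^*_{m,m+1}=0$; by weak monotonicity $f^*_{m,m'}=0$ for every $m'\ge m+1$, so task $m$ is never served over its only slots $\{m+1,\dots,K+1\}$, contradicting (\ref{theorem4_eq1}b). The zero set of $\{\alpha^*_m\}$ is therefore a prefix $\{2,\dots,i\}$, which is part 2. For part 1: on $m\le i$ we have $\psi_m=0$ and $f^*_{n,m}=\sqrt{\beta^*_n/3\kappa}$ (constant and positive); on $m>i$, $\psi_m$ is strictly increasing, so $f^*_{n,m}$ strictly decreases while $\psi_m<\beta^*_n$ and is identically $0$ once $\psi_m\ge\beta^*_n$, and the three blocks occur in this order because ``$\psi_m=0$'' and ``$\psi_m<\beta^*_n$'' are interval conditions on $m$. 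Since ``$\psi_m=0$'' involves the slot alone, the constant block ends at the same $t_i$ for every task, giving the simultaneous decrease.

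\emph{Part 3 --- strategy.} Write $D_j:=\sum_{m=j}^{K+1}\Delta t_m$. First, $\digamma(i+1)-\digamma(i)=\big(\sum_{n=i}^{K}F_n\big)\big(\tfrac1{D_{i+1}}-\tfrac1{D_i}\big)>0$ by a one-line computation, so $\digamma$ is strictly increasing and the half-open intervals $[\digamma(i-1),\digamma(i))$ partition the feasible range of $F_{\max}$ given by Proposition~\ref{proposition_1}. I would then show the index of the first congested slot is nondecreasing in $F_{\max}$ --- more server capacity cannot create an earlier bottleneck --- which reduces part 3 to locating, for each $i$, the value of $F_{\max}$ at which slot $t_i$ flips between congested and uncongested. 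The handle on that value is the identity obtained by summing the active constraints (\ref{theorem4_eq1}b) over all $n$: when $t_2,\dots,t_{i-1}$ are uncongested, so $f^*_{n,m}=g_n:=\sqrt{\beta^*_n/3\kappa}$ there, and $t_i,\dots,t_{K+1}$ are congested, so (\ref{theorem4_eq1}a) is tight on them, interchanging the order of summation gives $F_{\max}\,D_i=\sum_{n=1}^{K}F_n-\sum_{n=1}^{i-2}g_n\,(D_{n+1}-D_i)$. Feeding into this the waterfilling relations $g_n^2-(f^*_{n,m})^2=\psi_m/3\kappa$ on the congested slots and the capacity equations, and pushing $F_{\max}$ to the ends of the interval, is meant to collapse the right-hand side to $\digamma(i)$ (and to $\digamma(i-1)$ at the other end).

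\emph{The obstacle.} The real difficulty is that the levels $g_n$ of the early tasks are \emph{not} separable: they are tied together through the shared congested slots by the common prices $\psi_m$, and some early tasks may themselves enter the decreasing or zero phase before $t_{K+1}$, so one cannot argue task by task that $g_n$ equals its ``even'' value $F_n/D_{n+1}$. The cleanest rigorous route I see is constructive: for each regime $\digamma(i-1)\le F_{\max}<\digamma(i)$, exhibit an explicit primal--dual triple $(\pmb f,\{\alpha_m\},\{\beta_n\})$ --- uncongested constant rates on $t_2,\dots,t_{i-1}$ and a waterfilling profile on $t_i,\dots,t_{K+1}$, with $\{\beta_n\}$ chosen so that (\ref{theorem4_eq1}b) holds with equality --- and verify it satisfies the KKT system of \eqref{theorem4_eq1}; convexity then makes it globally optimal, and by inspection its first congested slot is $t_i$. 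Showing that this construction is feasible \emph{exactly} on $[\digamma(i-1),\digamma(i))$, where the staggered availability slots $t_{n+1}$ and the algebraic form of $\digamma$ must be reconciled, is the step I expect to cost the most work.
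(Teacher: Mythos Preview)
Your argument for Parts 1 and 2 is correct and is cleaner than the paper's. The paper first establishes a ``row property'' (Corollary~\ref{theorem1}: $f^*_{n,m}$ is nonincreasing in $m$) and a ``column property'' (Corollary~\ref{coro_2}) through a primal energy analysis --- a separate lemma shows that, slot-pair by slot-pair, equal frequencies minimize energy, and several pages of case analysis build the monotonicity from there --- and only afterwards reads off the structure of $\{\alpha^*_m/\Delta t_m\}$ from KKT. You obtain the monotonicity of $\psi_m:=\alpha^*_m/\Delta t_m$ directly from the dual, using only the closed form of Proposition~\ref{theorem_10} and complementary slackness; your contradictions for $\psi_m>\psi_{m+1}$ and for $\psi_m=\psi_{m+1}>0$ are tight. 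This route is shorter and sidesteps the auxiliary corollaries entirely; the paper's route has the minor compensating advantage that the row property is then available as a free-standing primal statement.

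For Part 3 you are closer than you think, and the constructive primal--dual verification you propose is unnecessary. The paper's ``only if'' direction is exactly your summation identity $F_{\max}\,D_i=\sum_{n}F_n-\sum_{n\le i-2}g_n\,(D_{n+1}-D_i)$ together with the one inequality you are missing: $g_n>F_n/D_{n+1}$ for each $n\le i-2$. This is immediate from Part~1 --- task $n$'s profile is constant at level $g_n$ on slots $n+1,\dots,i-1$ and strictly smaller thereafter, with time-weighted total $F_n$ over $[n+1,K+1]$, so the constant level must exceed the overall average $F_n/D_{n+1}$. Substituting that bound into your identity gives $F_{\max}<\digamma(i)$ directly. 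The lower bound $F_{\max}\ge\digamma(i-1)$ then follows from the same upper bound applied at $i-1$, uniqueness of the transition point, and the strict monotonicity of $\digamma$ you already computed; the paper's ``if'' direction uses the same proportionality comparison in the reverse direction. So the obstacle you flag --- that the levels $g_n$ are coupled through the congested slots and need not \emph{equal} $F_n/D_{n+1}$ --- is real, but the argument only needs the \emph{inequality} $g_n>F_n/D_{n+1}$, and that comes for free from the shape already established in Part~1.
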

\begin{proof}
	The proofs of 1), 2) and 3) are provided in Appendix \ref{proof_theorem_5}, \ref{proof_theorem1_2)}, \ref{proof_theorem_8},  respectively.
\end{proof}


According to  property  1) in  Theorem~\ref{theorem_5}, the optimal frequency allocation
scheme for a certain offloading task always follows a specific pattern: 
the frequency allocated to each device remains constant initially, then gradually decreases and eventually reaches 
zero.  
This property  motivates us to deduce the condition  
$f^*_{n,i}>f^*_{n,i+1}$. 
The property 2) in Theorem~\ref{theorem_5} implies that the computation resource of the server is redundant at time slots from  $t_2$ to $t_{i}$, while the maximum computation resource $F_{\max}$ is utilized  at   slots from $t_{i+1}$ to $t_{K+1}$.   

According to property  1) in  Theorem~\ref{theorem_5}, unless $f^*_{n,n+1}=\cdots=f^*_{n,K+1}$ $(\forall n\in\mathcal{K})$, there always exists a special time slot $t_{\varkappa}$ we called ``\textit{transition point}'' such that $f^*_{n,n+1}=\cdots=f^*_{n,\varkappa-1}>f^*_{n,\varkappa}\geq\cdots\geq f^*_{n,K+1}$ $(3\leq \varkappa \leq K+1)$. The transition point indicates the number of time slots that the computation resource remains the same.  The computation resource decreases for all tasks at the transition point. 
The method to find out the transition point when it exists is given by property 3) in Theorem~\ref{theorem_5}. 

Property 3) in Theorem~\ref{theorem_5} also shows that the transition point is impacted by the computation ability of the  server. 
We can directly  determine the transition point $t_\varkappa$   utilizing property 3) in Theorem~\ref{theorem_5} without the need of solving problem \eqref{theorem4_eq1}. After determining the transition point, we have $\alpha^*_m=0$ $(2\leq m\leq\varkappa-1)$ according to property 2) in Theorem~\ref{theorem_5}. 


\begin{figure}[t]
	\centering
	\includegraphics[width=0.47\textwidth]{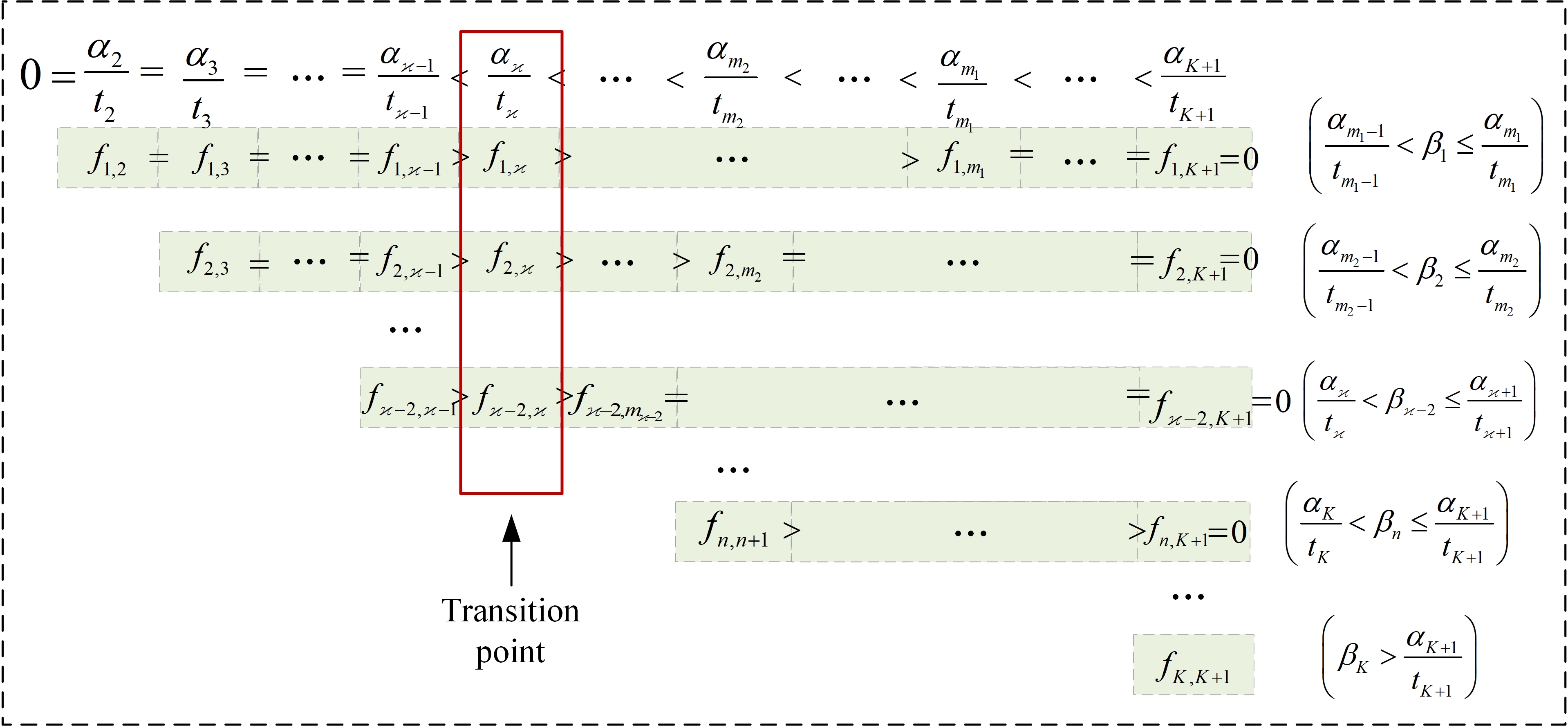}
	\caption{Properties of the optimal computation resource allocation.} \label{theorem_fig}
\end{figure}

Fig. \ref{theorem_fig} depicts an illustration of  properties in the optimal computation resource allocation. As can be seen, before the transition point $t_{\varkappa}$, the optimal $\alpha^*_m=0$ and $f^*_{n,m}$ keeps unchanged  as $m=2,\cdots,\varkappa-1$. 
Based on  Theorem~\ref{theorem_5},  a low-complexity algorithm is proposed in Algorithm \ref{alg1}. First, we check the feasibility of problem \eqref{theorem4_eq1} according to Proposition \ref{proposition_1}. Then, we determine the transition point $t_{\varkappa}$ based on property 3) in Theorem~\ref{theorem_5}. If there is no transition points, which means that the computation resource of server is abundant, we can directly obtain the optimal solution $f^*_{n,n+1}=\cdots=f^*_{n,K+1}=\frac{F_n}{\sum_{m=n+1}^{K+1}\Delta t_m}$ for $n=1,\cdots,K$; otherwise, we obtain the transition point $t_{\varkappa}$ and have  $\alpha^*_m=0$ for $m=2,\cdots,\varkappa-1$. 
Hence, we only need to find out the optimal $\alpha^*_m$ $(m=\varkappa,\cdots,K+1)$ and $\beta^*_n$ $(n=1,\cdots,K)$. Note that with given $\beta^*_n$ $(n=1,\cdots,K)$, we can obtain the optimal $\alpha^*_m$ by solving the following $(K-\varkappa+2)$ equalities 
\begin{align} \label{equality}
	G(m,\alpha_m)\triangleq\sum_{n=1}^{m-1}&\sqrt{\left[\frac{\beta_n}{3\kappa}-\frac{\alpha_m}{3\kappa \Delta t_m}\right]^+}=F_{\max},  \nonumber\\
 &\quad\quad\quad(m=\varkappa,\cdots,K+1),
\end{align}
 since the maximum frequency is utilized at time slots from $t_{\varkappa}$ to $t_{K+1}$. 
 Since $G(m,\alpha_m)$  decreases with respect to $\alpha_m$, the bisection method is adopted. 
 It should be noticed that $G(m,\alpha_m)$ achieves the maximum value of  $\sum_{n=1}^{m-1}\sqrt{\frac{\beta_n}{3\kappa}}$ when $\alpha_m=0$
and the minimum value of $0$ when $\frac{\alpha_m}{\Delta t_m}\geq\max_{n=1,\cdots,m-1} \beta_n$. Therefore, the upper bound of $\frac{\alpha_m}{\Delta t_m}$ is set as $\frac{\alpha_m^{ub}}{\Delta t_m}=\max_{n=1,\cdots,m-1} \beta_n$.  For the lower bound, 
we set $\frac{\alpha_m^{lb}}{\Delta t_m}=\frac{\alpha_{m-1}}{\Delta t_{m-1}}$ according to property 2) in  Theorem~\ref{theorem_5}. 
After obtaining $\alpha_m$ for $m=\varkappa,\cdots,K+1$, $\beta_n$ is updated by a  sub-gradient method \cite{cot}, where $\phi_n$ is the dynamically chosen step-size.  Through repeating Steps 5 to 13 until the objective of \eqref{theorem4_eq1} converges, we can obtain the optimal $\alpha^*_m$ for $m=2,\cdots,K+1$ and $\beta^*_n$ for $n=1,\cdots,K$.

The complexity of Algorithm \ref{alg1} is $\mathcal{O}\left(\frac{(K+2-\varkappa)}{\sqrt{\epsilon_1}}\log_2(\frac{1}{\epsilon_0})\right)$, where $\epsilon_0$ denotes the accuracy of the bisection method and $\epsilon_1$ is the accuracy of the objective of problem  \eqref{theorem4_eq1}. Compared with the complexity of  $\mathcal{O}\left(\left(K^2+K\right)^{3.5}\right)$ by the interior point method, the complexity of the proposed algorithm is significantly reduced.  Moreover, when $\varkappa$ is large, the  complexity can be further reduced since more numbers of $\alpha^*_m$ are zeros.

\begin{algorithm}[t]
	\begin{small}
	\caption{Optimal Computation Frequency Allocation Algorithm}
	\label{alg1}
	
	 If $F_{\max}\geq  \max_{n\in\{1,\cdots,K\}} \frac{\sum_{i=n}^KF_i}{\sum_{i=n+1}^{K+1}\Delta t_{i+1}}$, go to  Step 2; otherwise, problem \eqref{theorem4_eq1} is infeasible. 
	
	  According to Theorem \ref{theorem_5}, if there is no transition point, the optimal solution is given by $f^*_{n,n+1}=\cdots=f^*_{n,K+1}=\frac{F_n}{\sum_{m=n+1}^{K+1}\Delta t_m}$ for $n=1,\cdots,K$; otherwise, obtain transition point $t_{\varkappa}$ and let  $\alpha^*_m=0$ for $m=2,\cdots,\varkappa-1$. 
	
	  Initialize $\beta_n=\left(\frac{F_n}{\sum_{m=n+1}^{K+1}\Delta t_m}\right)^2$ for $n=1,\cdots,K$ and required precision $\epsilon_0$.  
	
	  \Repeat{\textnormal{the objective of \eqref{theorem4_eq1} converges}}{ 
	  	
	  \For{$m=\varkappa,\cdots,K+1$}{
	    Let  $\frac{\alpha_m^{lb}}{\Delta t_m}=\frac{\alpha_{m-1}}{\Delta t_{m-1}}$ and $\frac{\alpha_m^{ub}}{\Delta t_m}=\max_{n=1,\cdots,m-1}\beta_n$.  
	 
	  \While{$\frac{\alpha_m^{ub}}{\Delta t_m}-\frac{\alpha_m^{lb}}{\Delta t_m}>\epsilon_0$}{ 
	    Set $\frac{\alpha_m}{\Delta t_m}\leftarrow(\frac{\alpha_m^{lb}}{\Delta t_m}+\frac{\alpha_m^{ub}}{\Delta t_m})/2$. 
	   
	    Calculate $f_{n,m}=\sqrt{\left[\frac{\beta_n}{3\kappa}-\frac{\alpha_m}{3\kappa \Delta t_m}\right]^+}$. 
	   
	    If $\sum_{n=1}^{m-1}f_{n,m}>F_{\max}$, let  $\frac{\alpha_m^{lb}}{t_m}\leftarrow\frac{\alpha_m}{\Delta t_m}$; otherwise, let  $\frac{\alpha_m^{ub}}{t_m}\leftarrow\frac{\alpha_m}{\Delta t_m}$.
		}
		}

	  Update $\beta_n\leftarrow\left[\beta_n+\phi_n\left(F_n-\sum_{m=n+1}^{K+1} f_{n,m}\Delta t_m\right)\right]^+$ for $n=1,\cdots,K$.
	 
 	} 
	Output  the optimal $\{f^*_{n,m}\}$. 
\end{small}  
\end{algorithm}

\section{Joint User Scheduling and Resource Allocation Algorithm}
 In this section, we employ the  GBD method to solve  problem \eqref{P1}. The core idea of GBD method is decomposing the original MINLP problem into a primal problem related to continuous variables and a master problem associated with integer variables, which are  iteratively solved\footnote{Interested readers may refer to \cite{9140412,9399826,8910634,7372448} for details. }. Specifically, for problem \eqref{P1}, the primal problem is a joint communication and computation resource optimization problem with fixed user scheduling. The master problem optimizes user scheduling by utilizing the optimal solutions and dual variables of the primal problem. Next, we describe the detailed procedures. 

\subsection{Primal Problem}
With given user scheduling $\pmb{A}$, problem \eqref{P1} is reduced to the following optimization problem:
	\begin{subequations} \label{P2}
	\begin{align}
		\min_{\Delta \pmb{t},\pmb{f}}\quad &\sum_{n=1}^K \sum_{m=n+1}^{K+1}  \kappa f_{n,m}^3\Delta t_m,\tag{\ref{P2}}\\
		s.t.\quad  
		& \frac{\lambda(A_{\pi_n})^3}{h_{\pi_n}(\Delta t_n)^2}\leq \sum_{i=0}^{n-1}\Delta t_ih_{\pi_n}\eta P_0, \quad \forall n\in\mathcal{K},\\
		& \sum_{m=n+1}^{K+1} f_{n,m}\Delta t_m \geq A_{\pi_n}I_{\pi_n}, \quad \forall n\in\mathcal{K},\\
		&\text{(\ref{P1}b)}, \text{(\ref{P1}d)}, 
	\end{align}
\end{subequations}
  where $\pi_n$ denotes the index of the $n$-th offloading device, i.e., we have $\pi_n=k$ if $a_{k,n}=1$. Since the user scheduling scheme $\pmb{A}$ is known, the value of $\pi_n,(\forall n \in\mathcal{K})$ can be deduced and substituted into problem \eqref{P1}. Since problem \eqref{P2} is non-convex due to the constraints (\ref{P2}a),  (\ref{P2}b) and the objective, we introduce  $x_{n,m}=f_{n,m}\Delta t_m (\forall n=1,\cdots,K, \forall m=n+1,\cdots,K+1)$ to represent computation amounts of the $n$-th offloading task at time slot $m$. Hence, problem \eqref{P2} is equivalent to 
\begin{subequations} \label{P2.1}
	\begin{align}
		\min_{\Delta \pmb{t},\pmb{x}}\quad &\sum_{n=1}^K \sum_{m=n+1}^{K+1}  \kappa \frac{(x_{n,m})^3}{(\Delta t_m)^2},\tag{\ref{P2.1}}\\
		s.t.\quad  
		&\sum_{n=1}^{m-1} x_{n,m}\leq F_{\max}\Delta t_m,  \forall m = 2, \cdots, K+1,\\
		&\sum_{m=n+1}^{K+1} x_{n,m} \geq A_{\pi_n}I_{\pi_n}, \quad \forall n\in\mathcal{K},\\
		&\text{(\ref{P1}d)}, \text{(\ref{P2}a)}, 
	\end{align}
\end{subequations}
where  $\pmb{x}=(x_{n,m})_{\forall n\in\mathcal{K}, m \in\{n+1,\cdots,K+1\}}$ is the collections of $x_{n,m}$.  It can be proved that problem \eqref{P2.1} is convex utilizing the tricks of  perspective function \cite{boyd2004convex}. To further provide useful insights and reduce computation complexity, we utilize the block coordinate decent (BCD) method to iteratively optimize time allocation  and computation resource. Since the low-complexity computation resource allocation algorithm with given time allocation has been provided in Algorithm~\ref{alg1}, next we propose time allocation algorithm with fixed computation resource  allocation.  

The Lagrangian function of problem \eqref{P2.1} with respect to $\Delta\pmb{t}$ is given by

\begin{small}
\begin{align}
	\mathcal{L}=&\sum_{n=1}^K \sum_{m=n+1}^{K+1}\!\!\kappa \frac{(x_{n,m})^3}{(\Delta t_m)^2} \!+\!\sum_{n=1}^K\rho_n\!\left(\!\frac{\lambda(A_{\pi_n})^3}{h_{\pi_n}(\Delta t_{n})^{2}}\!- \!\sum_{i=0}^{n-1}\Delta t_ih_{\pi_n}\eta P_0\right)\nonumber\\
	&+\xi\left(\sum_{i=0}^{K+1}\Delta t_i- T\right) + \sum_{m=2}^{K+1}\omega_m\left(\sum_{n=1}^{m-1} x_{n,m}- F_{\max}\Delta t_m\right), 
\end{align}
\end{small}\\
where $\rho_n$, $\omega_m$ and $\xi$  are dual variables related to constraints (\ref{P2}a), (\ref{P2.1}a) and (\ref{P1}d), respectively. Taking the derivative with respect to $\Delta \pmb{t}$, we have

\begin{small}
\begin{align}
	\frac{\partial\mathcal{L}}{\partial\Delta t_0^*}&=-\sum_{n=1}^K\rho^*_nh_{\pi_n}\eta P_0 + \xi^*=0 \label{Lag_0},\\
	\frac{\partial\mathcal{L}}{\partial\Delta t^*_1}&=-\rho^*_1\frac{2\lambda(A_{\pi_1})^3}{h_{\pi_1}(\Delta t^*_{1})^{3}} - \sum_{n=2}^K\rho^*_n h_{\pi_n} \eta P_0 +\xi^*=0 \label{Lag_1},\\
	\frac{\partial\mathcal{L}}{\partial\Delta t^*_i}&=-\sum_{n=1}^{i-1}2\kappa\frac{(x_{n,i})^3}{(\Delta t^*_i)^3}-\rho^*_i\frac{2\lambda(A_{\pi_i})^3}{h_{\pi_i}(\Delta t^*_{i})^{3}} - \sum_{n=i+1}^K\rho^*_n h_{\pi_n} \eta P_0 \nonumber\\
 &\quad\ +\xi^* -\omega^*_iF_{\max}=0, \quad (2\leq i \leq K-1)\label{Lag_i} ,\\
	\frac{\partial\mathcal{L}}{\partial\Delta t^*_{K}}&\!=\!-\!\!\sum_{n=1}^{K-1}\!2\kappa\frac{(x_{n,K})^3}{(\Delta t^*_K)^3}\!-\!\rho^*_K\frac{2\lambda(A_{\pi_K})^3}{h_{\pi_K}(\Delta t^*_{K})^{3}} \!+\!\xi^* \!-\!\omega^*_KF_{\max}\!=\!0\label{Lag_K}, \\
	\frac{\partial\mathcal{L}}{\partial\Delta t^*_{K+1}}&=-\sum_{n=1}^{K}2\kappa\frac{(x_{n,K+1})^3}{(\Delta t^*_{K+1})^3}+\xi^* -\omega^*_{K+1}F_{\max}=0 \label{Lag_K+1}. 
\end{align}
\end{small}\\
Through solving above equations, the optimal $\Delta t^*_i$ $\left(\forall i\right)$ is obtained in the following proposition. 

\begin{proposition}\label{proposition_t}
    \emph{The optimal $\Delta\pmb{t}^*$ is given by 
    \begin{align}
        &\Delta t_0^*=T-\sum_{i=1}^{K+1}\Delta t_i^*,\\
        &\Delta t^*_{i}=\sqrt[3]{\frac{2\kappa h_{\pi_i}\sum_{n=1}^{i-1}(x_{n,i})^3+2\rho^*_{i}\lambda (A_{\pi_i})^3}{\xi^*h_{\pi_i} - \sum_{n=i+1}^K\rho^*_n (h_{\pi_n})^2 \eta P_0  -\omega^*_ih_{\pi_i}F_{\max}}},\nonumber \\
        &\quad\quad\quad\quad\quad\quad\quad\quad\quad(\forall i \in \{2,\cdots,K-1\}),\\
        &\Delta t^*_K=\sqrt[3]{\frac{2\kappa h_{\pi_K}\sum_{n=1}^{K-1}(x_{n,K})^3+2\rho_K^*\lambda(A_{\pi_K})^3}{\xi^*h_{\pi_K}-\omega^*_Kh_{\pi_K}F_{\max}}},\\
        &\Delta t_{K+1}^*=\sqrt[3]{\frac{2\kappa \sum_{n=1}^K(x_{n,K+1})^3}{\xi^*-\omega^*_{K+1}F_{\max}}}, 
    \end{align}
    and $\Delta t_1^*$ is the null point of $\Psi(x)$, where $\Psi(x)=\left(T-\sum_{i=1}^{K+1}\Delta t_i\right)x^2(h_{\pi_1})^2\eta P_0-\lambda(A_{\pi_1})^3$,   $x\in\Big(0,\frac{2}{3}\left(T-\sum_{i=2}^{K+1}\Delta t_i\right)\Big]$. 
    }
\end{proposition}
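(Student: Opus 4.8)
The plan is to derive each $\Delta t_i^*$ directly from the stationarity conditions \eqref{Lag_0}--\eqref{Lag_K+1}, which are the only KKT conditions that involve $\Delta\pmb t$ (the remaining complementary-slackness and primal/dual feasibility conditions determine the multipliers $\rho_n^*,\omega_m^*,\xi^*$, which we treat as given). First I would handle the ``generic'' slots $i\in\{2,\dots,K-1\}$: equation \eqref{Lag_i} is, after multiplying through by $(\Delta t_i^*)^3$, a linear equation in $(\Delta t_i^*)^3$, because every term except the constant ones carries exactly a factor $(\Delta t_i^*)^{-3}$. Solving it gives
\[
(\Delta t_i^*)^3=\frac{2\kappa\sum_{n=1}^{i-1}(x_{n,i})^3+2\rho_i^*\lambda(A_{\pi_i})^3/h_{\pi_i}}{\xi^*-\sum_{n=i+1}^K\rho_n^* h_{\pi_n}\eta P_0-\omega_i^* F_{\max}},
\]
and multiplying numerator and denominator by $h_{\pi_i}$ yields exactly the stated closed form. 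The boundary slots $i=K$ (equation \eqref{Lag_K}, which simply lacks the $-\sum_{n=i+1}^K\rho_n^*h_{\pi_n}\eta P_0$ term) and $i=K+1$ (equation \eqref{Lag_K+1}, which additionally lacks the offloading-power term $\rho_i^*\lambda(A_{\pi_i})^3$ because no device offloads in slot $K+1$) are obtained by the identical manipulation, giving the two special formulas. Finally, $\Delta t_0^*$ is read off from the time-budget constraint \eqref{P1}d, which is active at optimum (one must note $\xi^*>0$, otherwise \eqref{Lag_0} would force all $\rho_n^*=0$ and then \eqref{Lag_1}--\eqref{Lag_K+1} are inconsistent with positive frequencies); this gives $\Delta t_0^*=T-\sum_{i=1}^{K+1}\Delta t_i^*$.

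The remaining and genuinely nontrivial part is $\Delta t_1^*$. Equation \eqref{Lag_1} reads $\rho_1^*\,2\lambda(A_{\pi_1})^3/\big(h_{\pi_1}(\Delta t_1^*)^3\big)=\xi^*-\sum_{n=2}^K\rho_n^*h_{\pi_n}\eta P_0$; one cannot simply invert this, because $\rho_1^*$ is itself coupled to $\Delta t_1^*$ through the energy-causality constraint \eqref{P2}a for $n=1$. The idea is to use the fact that constraint \eqref{P2}a is active for $n=1$ at the optimum (device $1$ harvests in slot $0$ only, and spending strictly less energy than harvested can only be suboptimal since shrinking $\Delta t_1$ lowers the objective indirectly through the time budget — this monotonicity should be argued carefully). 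Activeness of \eqref{P2}a for $n=1$ reads $\lambda(A_{\pi_1})^3/\big(h_{\pi_1}(\Delta t_1^*)^2\big)=\Delta t_0^* h_{\pi_1}\eta P_0=\big(T-\sum_{i=1}^{K+1}\Delta t_i^*\big)h_{\pi_1}\eta P_0$. Rearranging gives precisely $\Psi(\Delta t_1^*)=0$ with $\Psi(x)=\big(T-\sum_{i=1}^{K+1}\Delta t_i\big)x^2(h_{\pi_1})^2\eta P_0-\lambda(A_{\pi_1})^3$ (where in $\Psi$ the slot $\Delta t_1$ is the variable $x$ and the other $\Delta t_i^*$, $i\ge2$, are the already-computed constants). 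It remains to justify the stated search interval: $x>0$ is immediate, and the upper bound $x\le\frac23\big(T-\sum_{i=2}^{K+1}\Delta t_i^*\big)$ comes from requiring $\Delta t_0^*=T-\sum_{i\ge1}\Delta t_i^*\ge \tfrac12\Delta t_1^*$, which I expect is the point at which $\Psi$ transitions from having a root to not having one — i.e. it is the location of the maximum of the concave-looking map $x\mapsto(c-x)x^2$ on $[0,c]$ with $c=T-\sum_{i=2}^{K+1}\Delta t_i^*$, whose maximizer is $2c/3$; this guarantees existence and uniqueness of the null point on that interval.

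The main obstacle I anticipate is not the algebra but the two \emph{optimality arguments} that feed the algebra: (i) showing the time-budget constraint \eqref{P1}d and the $n=1$ energy-causality constraint \eqref{P2}a are both active at the optimum, so that we are entitled to use them as equalities, and (ii) showing the denominators appearing in every cube root are strictly positive (equivalently, that $\xi^*$ dominates the competing negative terms), which is needed for the formulas to be well-defined real numbers and which presumably follows from primal feasibility combined with the sign constraints $\rho_n^*,\omega_m^*,\xi^*\ge0$ and the fact that the optimal frequencies are positive (Theorem~\ref{theorem_5}, property~1, guarantees $f_{n,n+1}^*>0$). Once those structural facts are in hand, the proposition follows by the routine rearrangements sketched above, so in the write-up I would state (i) and (ii) as short lemmas or inline claims and then present the cube-root solves and the $\Psi$-derivation as straightforward consequences.
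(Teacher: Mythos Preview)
Your handling of $\Delta t_i^*$ for $i\in\{2,\dots,K+1\}$ and of $\Delta t_0^*$ matches the paper's argument essentially step for step, including the use of property~1) of Theorem~\ref{theorem_5} (via $x_{i-1,i}>0$) to force the denominators positive, and the $\xi^*>0$ argument (the paper reads it off \eqref{Lag_K+1} directly, but your route through \eqref{Lag_0} is equivalent).

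The gap is in your treatment of $\Delta t_1^*$. Your monotonicity justification for activeness of \eqref{P2}a at $n=1$ does not work: the objective of \eqref{P2.1} depends only on $\Delta t_m$ for $m\ge 2$, so ``shrinking $\Delta t_1$ lowers the objective indirectly through the time budget'' is not a valid inference---any time freed from $\Delta t_1$ that is pushed into some $\Delta t_m$, $m\ge2$, simultaneously shrinks the right-hand side of \eqref{P2}a for every $n\ge 2$, so you cannot conclude an improvement. In fact the constraint need \emph{not} be active at every optimum: when $\rho_1^*=0$ the KKT system is degenerate in $(\Delta t_0,\Delta t_1)$ and there is a one-parameter family of optimal pairs, most of which leave \eqref{P2}a strict.

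The paper's fix is a clean algebraic trick you are missing: substitute \eqref{Lag_0} into \eqref{Lag_1}. This cancels $\xi^*$ and $\sum_{n\ge 2}\rho_n^* h_{\pi_n}\eta P_0$ and leaves
\[
\rho_1^*\Bigl(\tfrac{2\lambda(A_{\pi_1})^3}{h_{\pi_1}(\Delta t_1^*)^3}-h_{\pi_1}\eta P_0\Bigr)=0.
\]
Now split cases. If $\rho_1^*>0$, the bracket vanishes, giving $\Delta t_1^*$ directly, and complementary slackness makes \eqref{P2}a active---so $\Psi(\Delta t_1^*)=0$ follows. If $\rho_1^*=0$, \eqref{Lag_1} is automatically satisfied and any feasible $(\Delta t_0,\Delta t_1)$ with $\Delta t_0+\Delta t_1=T-\sum_{i\ge2}\Delta t_i^*$ is optimal; among these one may \emph{choose} the pair with \eqref{P2}a active, i.e.\ the null point of $\Psi$, without loss of generality. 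Your analysis of the interval $\bigl(0,\tfrac{2}{3}(T-\sum_{i\ge2}\Delta t_i)\bigr]$ as the increasing branch of $x\mapsto(c-x)x^2$ is exactly what the paper does to secure uniqueness of that null point.
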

\begin{proof}
    The proof of Proposition~\ref{proposition_t} is provided in Appendix~\ref{proof_proposition_t}. 
\end{proof}

Through iteratively optimizing time allocation and computation resource  allocation, we can obtain the optimal solution of primal problem \eqref{P2.1}. 
However, if problem \eqref{P2.1} is infeasible, we formulate the corresponding $\ell_1$-minimization problem as follows: 
\begin{subequations} \label{P2_feasible}
	\begin{align}
		\min_{\Delta \pmb{t},\pmb{x}, 
  \pmb{\zeta}>0,\pmb{\iota}>0}&\sum_{k=1}^K \left(\zeta_k+\iota_k\right),\tag{\ref{P2_feasible}}\\
		s.t.\quad\    
		&\sum_{n=1}^{K}a_{k,n} \frac{\lambda(A_{k})^3}{h_k(\Delta t_n)^2}\!\leq\! \zeta_k\! +\! \sum_{n=1}^K\sum_{i=0}^{n-1}a_{k,n}\Delta t_ih_k\eta P_0, \nonumber\\
  &\quad\quad\quad\quad\quad\quad\quad\quad\quad\quad\quad\quad \forall k\in\mathcal{K},\\
		&\iota_k\!+\!\sum_{n=1}^K \sum_{m=n+1}^{K+1}\! a_{k,n}x_{n,m} \!\geq\! A_kI_k, \forall k\in\mathcal{K},\\
		&\text{(\ref{P1}d)}, \text{(\ref{P2.1}a)}.  
	\end{align}
\end{subequations}
Since problem \eqref{P2_feasible} is convex and always feasible, we can use the interior point method to obtain the optimal solution and corresponding dual variables. 

{ 
Furthermore, we can observe that the solution of primal problem always provides a performance upper bound for problem \eqref{P1} since user scheduling is fixed. Then the upper bound is updated as $UB^{(j)}\leftarrow\min\{UB^{(j-1)},f^{(j)}\}$, where $f^{(j)}$ denotes the objective value of primal problem \eqref{P2}. As can be seen, the upper bound is always non-increasing as iteration proceeds.  Subsequently, we construct master problem using the solutions and dual variables of primal problem \eqref{P2.1} and feasibility problem \eqref{P2_feasible}.

\subsection{Master Problem}
At each iteration, optimality cut or feasibility cut are added to master problem depending on whether the primal problem is feasible. Denote $\mathcal{J}_1$ and $\mathcal{J}_2$ as the set of iteration indexes indicating the primal problem is feasible and infeasible, respectively. 
Specifically, 
the optimality cut for each  $j\in\mathcal{J}_1$ of feasible iterations is defined as 
 
 \begin{small}
\begin{align} \label{opt_cut}
	&\theta(\pmb{A},\pmb{\rho}^{(j)},\pmb{\beta}^{(j)})=\sum_{n=1}^K \sum_{m=n+1}^{K+1}  \kappa \frac{\left(x^{(j)}_{n,m}\right)^3}{\left(\Delta t^{(j)}_m\right)^2} \nonumber\\
 &+\sum_{k=1}^K\rho^{(j)}_k\left(\sum_{n=1}^{K}a_{k,n} \frac{\lambda(A_{k})^3}{h_k\left(\Delta t^{(j)}_n\right)^2}- \sum_{n=1}^K\sum_{i=0}^{n-1}a_{k,n}\Delta t^{(j)}_ih_k\eta P_0\right)\nonumber\\
	&+\sum_{k=1}^K \beta_k^{(j)}\left( A_kI_k-\sum_{n=1}^K \sum_{m=n+1}^{K+1} a_{k,n}x^{(j)}_{n,m}\right),
\end{align} 
\end{small}\\
where $\rho_k^{(j)}$ and $\beta_k^{(j)}$ represent the dual variables related to primal problem at the $j$-th iteration, $x_{n,m}^{(j)}$ and  $\Delta t^{(j)}_m$ denote the solution of primal problem at the $j$-th iteration. The terms irrelavant to $\pmb{A}$ are omitted based  on complementary slackness theorem  \cite{cot}. 
Similarly, the feasibility cut for each  $j\in\mathcal{J}_2$ of infeasible iterations is defined as 

\begin{small}
\begin{align} \label{feas_cut}
	&\hat{\theta}(\pmb{A},\hat{\pmb{\rho}}^{(j)},\hat{\pmb{\beta}}^{(j)})=\sum_{k=1}^K \hat{\beta}_k^{(j)}\left( A_kI_k-\sum_{n=1}^K \sum_{m=n+1}^{K+1} a_{k,n}\hat x^{(j)}_{n,m}\right)\nonumber\\
 &+\sum_{k=1}^K\hat{\rho}^{(j)}_k\left(\sum_{n=1}^{K}a_{k,n} \frac{\lambda(A_{k})^3}{h_k\left(\Delta \hat t^{(j)}_n\right)^2}- \sum_{n=1}^K\sum_{i=0}^{n-1}a_{k,n}\Delta \hat t^{(j)}_ih_k\eta P_0\right),
\end{align} \end{small}\\
where  $\hat\rho_k^{(j)}$ and $\hat\beta_k^{(j)}$ represent the dual variables related to feasibility problem at the $j$-th iteration, $\hat x_{n,m}^{(j)}$ and  $\Delta \hat t^{(j)}_m$ denote the solution of feasibility problem at the $j$-th iteration. 
Therefore, master problem is formulated as 
	\begin{subequations} \label{master}
	\begin{align}
		\min_{\pmb{A},\psi}\quad &\psi,\tag{\ref{master}}\\
		s.t.\quad  
		&\theta(\pmb{A},\pmb{\rho}^{(j)},\pmb{\beta}^{(j)})\leq\psi,\quad\forall j\in\mathcal{J}_1,\\
		&\hat{\theta}(\pmb{A},\hat{\pmb{\rho}}^{(j)},\hat{\pmb{\beta}}^{(j)})\leq0,\quad\forall j\in\mathcal{J}_2,\\
		&\text{(\ref{P1}e)}- \text{(\ref{P2}g)}. 
	\end{align}
\end{subequations}
In particular, (\ref{master}a) and (\ref{master}b)  denote the set of  hyperplanes spanned by the
optimality cut and feasibility cut from the first to the $j$-th iteration, respectively. The two different types of cuts are exploited
to reduce the search region for the global optimal solution \cite{7106496}.  
Master problem \eqref{master} is a standard mixed-integer linear programming (MILP) problem, which can be solved by numerical solvers such as Gurobi\cite{Gurobi} and Mosek\cite{Mosek}.  
Since master problem is the relaxing problem of MINLP problem \eqref{P1},  solving master problem provides a performance lower bound for problem \eqref{P1}. 
The lower bound is given by $LB^{(j)}\leftarrow\psi$. Since at each iteration, an additional cut (optimality cut or feasibility cut) is added to master problem which narrows the feasible zone, the lower bound is always non-decreasing. 
 As a consequence, the performance upper
bound obtained by primal problem and the performance
lower bound obtained by the master problem  are non-increasing and non-decreasing w.r.t. the iteration index, respectively. As a result, the performance upper bound and the 
performance lower bound go to converge \cite{9140412}.
Therefore, through iteratively solving primal problem and master problem, we can obtain the optimal solution when the upper bound and lower bound are sufficiently close \cite{8910634,7106496}.
The detailed algorithm is summarized in Algorithm \ref{alg_GBD}.
}

\subsection{Complexity Analysis}
{ 
The complexity of solving problem \eqref{P1} by Algorithm~\ref{alg_GBD} lies in solving the primal problem, feasibility problem, and master problem at each iteration. 
For primal problem, where we iteratively update time allocation variables and frequency variables. The frequency optimization method is given in Algorithm~\ref{alg1}, whose complexity is $\mathcal{O}\left(\frac{K+2-\varkappa}{\sqrt{\epsilon_1}}\log_2(1/\epsilon_0)\right)$ as analyzed in Section III. The time allocation optimization is according to Proposition~\ref{proposition_t}, whose complexity is estimated as $\mathcal{O}\left(K\log_2(T)\right)$. Therefore, the total complexity of solving primal problem is 
$\mathcal{O}\left(\frac{K+2-\varkappa}{\sqrt{\epsilon_1}}\log_2(1/\epsilon_0)K\log_2(T)L_1\right)$, where $L_1$ denotes the iteration number in the primal problems. 
For the feasibility problem, the complexity is given by $\mathcal{O}\left(\left(\frac{(K+1)K}{2}+3K+1\right)^{3.5}\right)$ by the interior point method. For the master problem, the computational complexity is $\mathcal{O}\left(2^K\right)$ by the Branch and Bound (BnB) method \cite{7973989}. 
}

\begin{algorithm}[t]
\begin{small} 
	\caption{Joint User Scheduling and Resource Allocation Algorithm}
	\label{alg_GBD}
	
	{\bf Initialize} arbitrary feasible user scheduling $\pmb{A}^{(j)}$, and set $j=1$, $UB=+\infty$, $LB=-\infty$, $\mathcal{J}_1=\mathcal{J}_2=\emptyset$;  
	
	\Repeat{\textnormal{$UB$ and $LB$ are sufficiently close}}{ 
		 
		\eIf{\textnormal{problem \eqref{P2.1} is feasible}}{
		
		\Repeat{\textnormal{the objective of \eqref{P2.1} converges}}{
			
			Obtain the optimal computation resource $\pmb{f}^{(j)}$ and dual variable $\beta_k^{(j)}$ according to Algorithm~\ref{alg1};
		
			Obtain the optimal time allcation $\Delta \pmb{t}^{(j)}$ and dual variable $\rho_k^{(j)}$;
			}
		
		Update $UB$ and $\mathcal{J}_1$; 
		
		}{
		Solve feasibility problem \eqref{P2_feasible} and update $\mathcal{J}_2$; 
		
		Obtain the corresponding optimal solution  $\hat{\pmb{x}}^{(j)}$ and $\Delta \hat{\pmb{t}}^{(j)}$ as well as dual variables $\hat\rho_k^{(j)}$ and $\hat\beta_K^{(j)}$; 
		}
		
		Solve master problem \eqref{master} by adding optimality cuts \eqref{opt_cut} and feasibility cuts \eqref{feas_cut}; 
		
		Set $j\leftarrow j +1$;
		
		Update $\pmb{A}^{(j)}$ and $LB$; 
	}

	Output  the optimal $\pmb f^*$, $\Delta \pmb t^*$ and $\pmb{A}^*$. 
\end{small}  
\end{algorithm}

\section{Simulations}
In this section, we perform simulations to validate the  proposed scheme and algorithm. 
There are  $K=10$ devices around the server.  
 The task size $A_k$ and computation intensity obey uniform distribution on $[10,50]$ Kbits and $[500, 1500]$ cycles/bit,  respectively.  The transmit power of BS is $P_0=3$~W.  The energy coefficient of the MEC server and energy conversion factor of devices are set as $\kappa = 10^{-26}$ and $\eta=0.51$. We set the energy constant of transmission $\lambda=10^{-25}$. Furthermore, the maximum computation resource is  $F_{\max}=1$ GHz and the allowable delay is $T=1$ second.    
 {In channel model, we set antenna gain $A=3$, carrier frequency $f_c=915$ MHz, path-loss factor $\ell=3$, speed of light $c=3\times10^8$ m/s, and the Rician factor is $\gamma=0.3$.}    
{The following benchmarking schemes are provided: 
	\begin{itemize}
		\item \emph{JSORA\cite{9632276}}: The joint sensing-and-offloading resource allocation algorithm, where the allocated frequency for each device keeps unchanged during its computation duration, i.e., $f_{n,n+1}=\cdots=f_{n,K+1}$ $(\forall n\in\mathcal{K})$.
		\item \emph{JCCRM-Sync\cite{8264794,7997360,8249785,8986845,8234686,8334188,9881553}}: The joint communication and computation resource management algorithm, where the computing of server will not begin until all tasks are received, which is adopted by most of the literature. 
		\item \emph{Random scheduling  scheme\cite{8952620,9292432}}: We randomly set the user scheduling for offloading.  
		\item \emph{Exhaustive search}: We randomly choose multiple initial points for Algorithm 2 and
		select the smallest result as output. The results of exhaustive search 
		method can be regarded as global optimal solutions. 
	\end{itemize}
 Besides, all   accuracies used in the simulations are set as $10^{-5}$ for fairness. }

\begin{figure}[t]
	\centering
	\begin{minipage}[t]{0.45\textwidth} 
		\centering
		\includegraphics[width=1\textwidth]{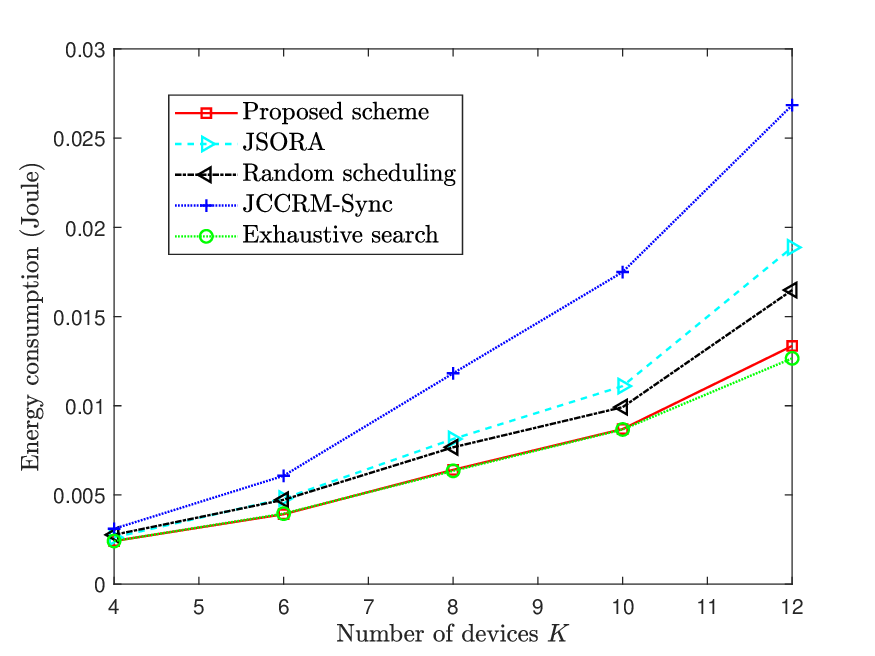} 
		\caption{Energy consumption versus number of devices.}\label{E_K} 
	\end{minipage}
	\begin{minipage}[t]{0.45\textwidth}
		\centering
		\includegraphics[width=1\textwidth]{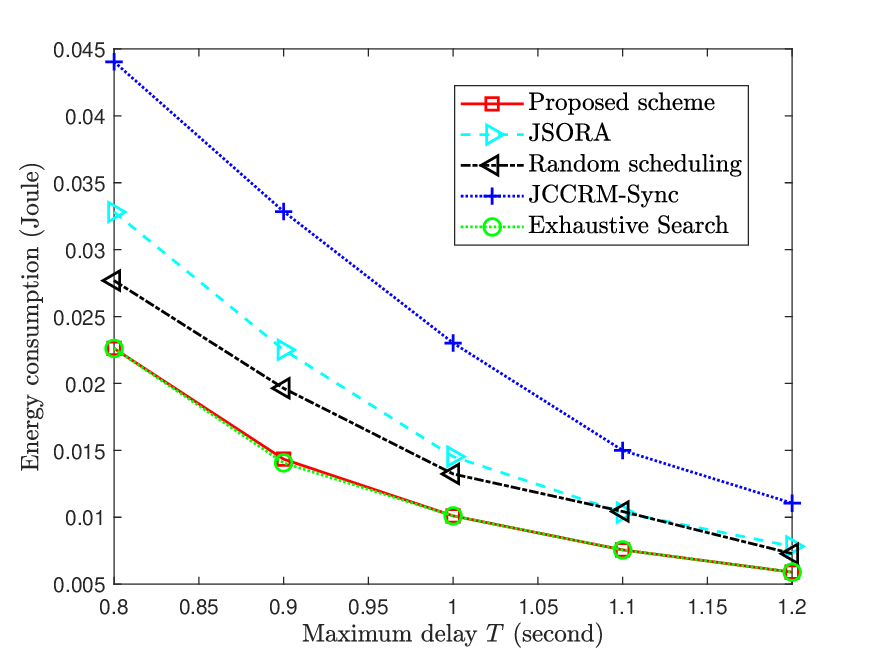}
		\caption{Energy consumption versus maximum delay.}\label{E_T}
	\end{minipage}
\end{figure}

{ 
Fig.~\ref{E_K} demonstrates the  energy consumption performance comparisons between different
schemes under different numbers of devices. We can observe that the energy consumption of the proposed scheme as well as benchmark schemes increases with the number of devices getting large. This is because that devices have to compete for fixed communication and computation resource. As the number of devices increases, the average transmission time and computation time of each device get small, thus average computation resource becomes large. Therefore, larger energy consumption of server is required in order to finish devices' tasks within the required delay.  Moreover, as can be seen in Fig.~\ref{E_K}, the gap between the proposed algorithm and exhaustive search scheme is small. This indicates that the proposed algorithm achieves close-to-optimal solutions. Compared with JSORA scheme, random scheduling scheme and JCCRM-Sync scheme, the proposed scheme achieves $30.88\%$, $19.51\%$, $87.87\%$  energy  reductions, respectively.   This can be explained by that the proposed algorithm can take full advantage of the flexibility of asynchronous computing and user scheduling. Particularly, compared with the proposed scheme, JCCRM-Sync scheme wastes the idle computation resource from time slots $t_2$ to $t_K$. Similarly, JSORA scheme can not make full use of computation resource from time slots $t_2$ to $t_K$. Hence, its performance is  better than JCCRM-Sync but worse than the proposed scheme. Additionally, random scheduling, as most of the existing literature does, can not utilize the heterogeneity of tasks size and computation intensity well in MEC networks. 
}

{ 
In Fig.~\ref{E_T}, we depict the energy consumption curves of different schemes versus the maximum allowable delay. As can be seen, the energy consumption of all schemes decreases as the maximum delay becomes large. This is because as delay gets large, the server has more time to finish tasks. Thus, the fewer  computation resource is allowable. Hence,  energy consumption can be reduced. From Fig.~\ref{E_T}, it can be  verified that the proposed algorithm outperforms benchmarking schemes in terms of energy consumption in the considered region of delay,   especially in resource-scarce scenarios. This phenomenon can be observed in Fig.~\ref{E_K} and Fig.~\ref{E_T} that the difference in energy consumption between the  proposed algorithm and benchmark schemes gets small when resource is abundant. This is because the flaws of benchmark schemes compared with the proposed algorithm can be    appropriately compensated by utilizing additional sources. Furthermore, it should be noticed  that JSORA scheme is equivalent to the proposed scheme when computation resource is abundant according to  Theorem~\ref{theorem_5}. 
}

\begin{figure} [t]
	\centering 
	\subfigure[Transition point: $t_3$]{\label{}
		\includegraphics[width=0.23\textwidth]{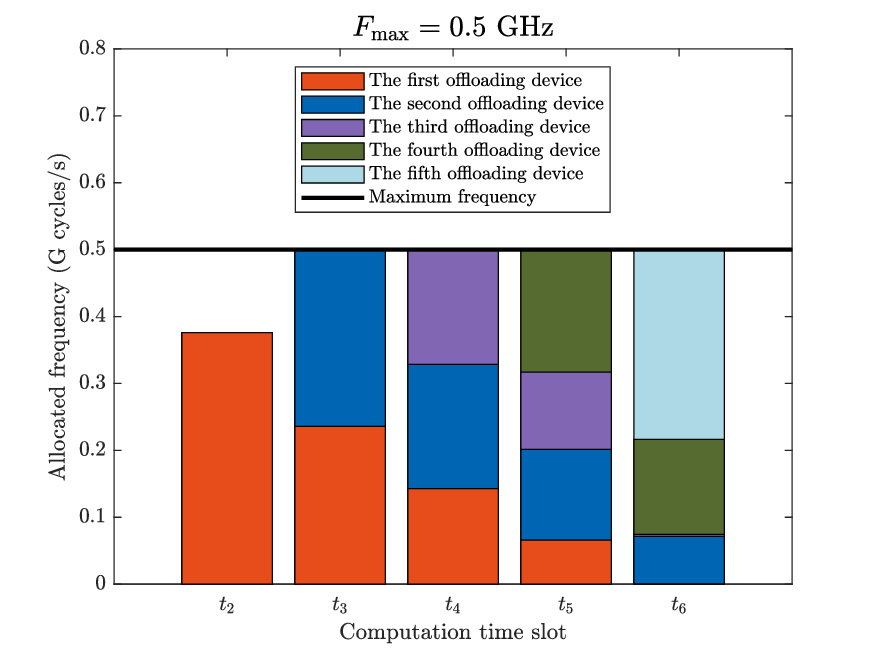}}
	\subfigure[Transition point: $t_5$]{\label{}
		\includegraphics[width=0.23\textwidth]{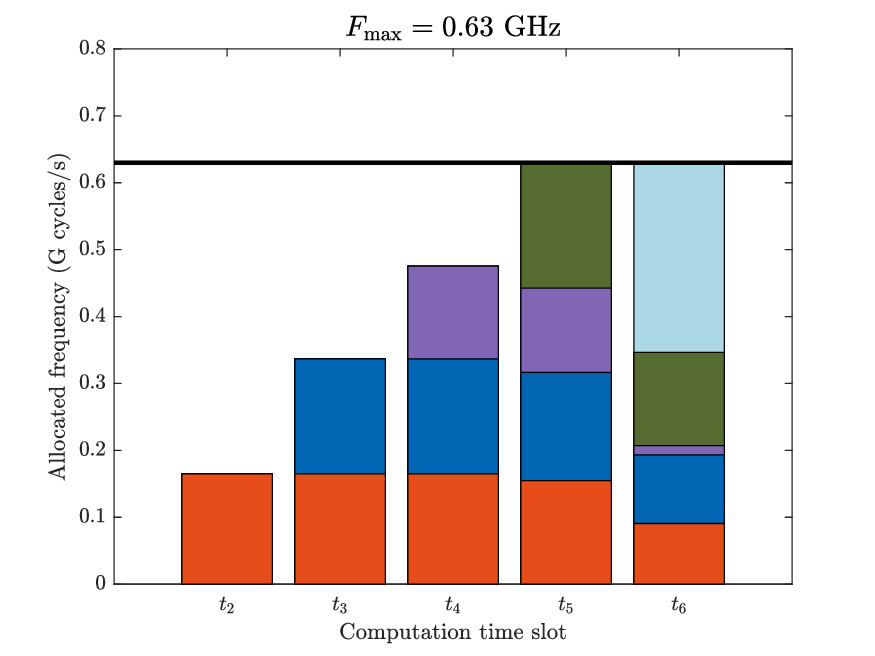}}
	
	\subfigure[Transition point: $t_6$]{\label{}
		\includegraphics[width=0.23\textwidth]{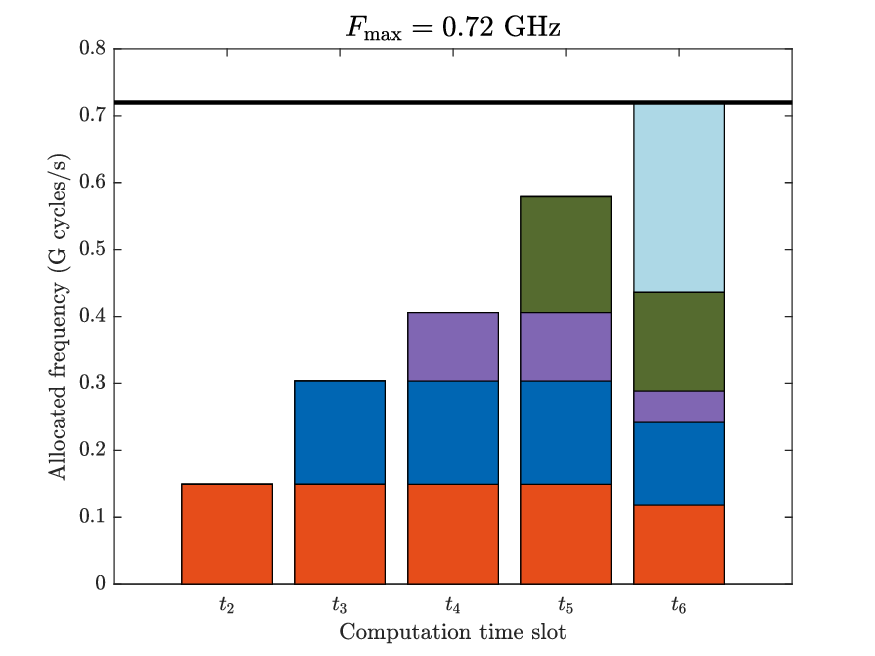}}
	\subfigure[No transition point]{\label{}
		\includegraphics[width=0.23\textwidth]{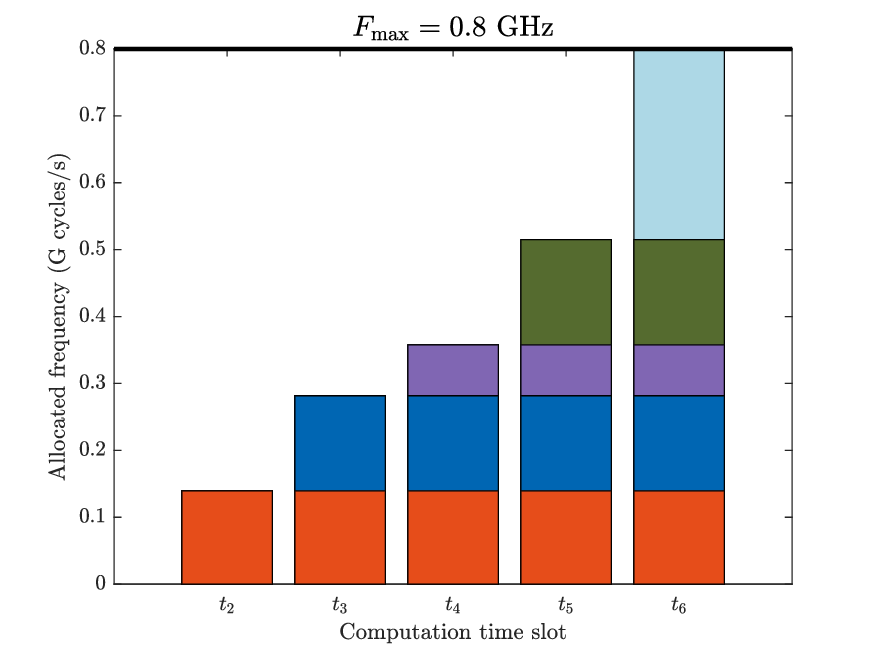}}
	\caption{Illustrations of different transition points under different maximum frequencies.}
	\label{E_bar}\vspace{-2ex}
\end{figure}

Fig.~\ref{E_bar} illustrates a specific case of the allocated frequency of each device at each computation slot  under different maximum computation frequencies when $K=5$. It can be seen that as the maximum frequency $F_{\max}$ becomes large, the transition point is gradually postponed, and finally no transition point exists when computation resource is sufficiently large which is in accordance with Theorem~\ref{theorem_5}. Specifically, for each subfigure, we can find that before the transition point, the allocated frequency for each device being computed remains unchanged and the maximum frequency constraints do not work. From the transition point to the end, the allocated frequency for each device becomes small and the maximum frequency of 
 the server is used. This verifies  Theorem~\ref{theorem_5}.  

\begin{figure}[t]
	\centering
	\includegraphics[width=0.45\textwidth]{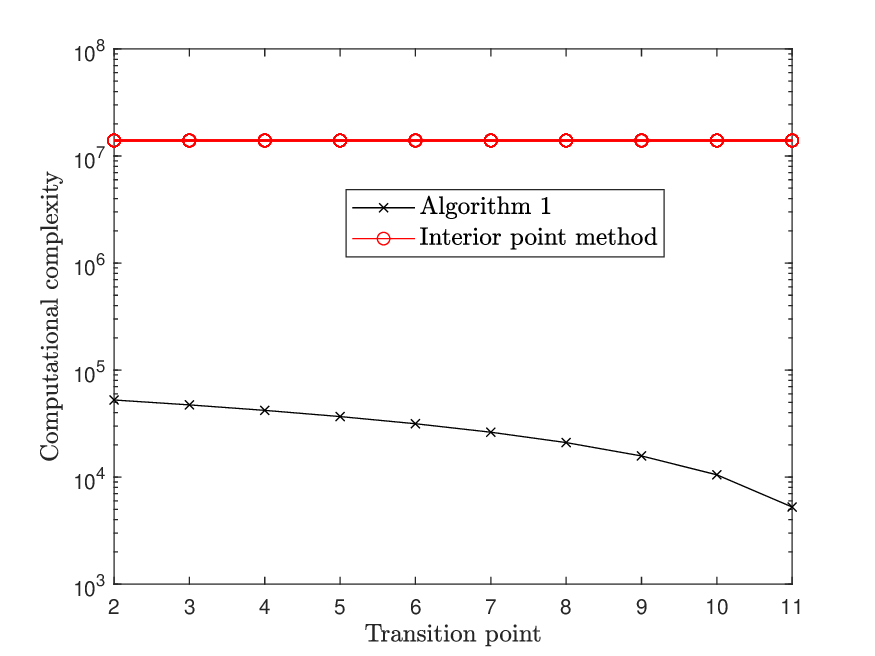}
	\caption{Comparisons of computational complexity with $K=10$. } \label{fig_complexity}
	\vspace{-1.5em}
\end{figure}

Fig.~\ref{fig_complexity} depicts the computational complexity comparisons  between the proposed Algorithm~\ref{alg1} and the interior point method under different transition points. As can be seen, the computational complexity of Algorithm~\ref{alg1} is significantly reduced compared with the interior point method, by more than $100$ times on average. As the transition point becomes larger, the complexity further decreases. For example, when the transition point $\varkappa=11$, the complexity of Algorithm~\ref{alg1} is reduced by $1000$ times. This is because the proposed computation resource allocation algorithm fully utilizes the properties in Theorem~\ref{theorem_5}
to reduce algorithm complexity, especially when the computation resource of the server is abundant. 

\begin{figure}[t]
	\centering
	\includegraphics[width=0.45\textwidth]{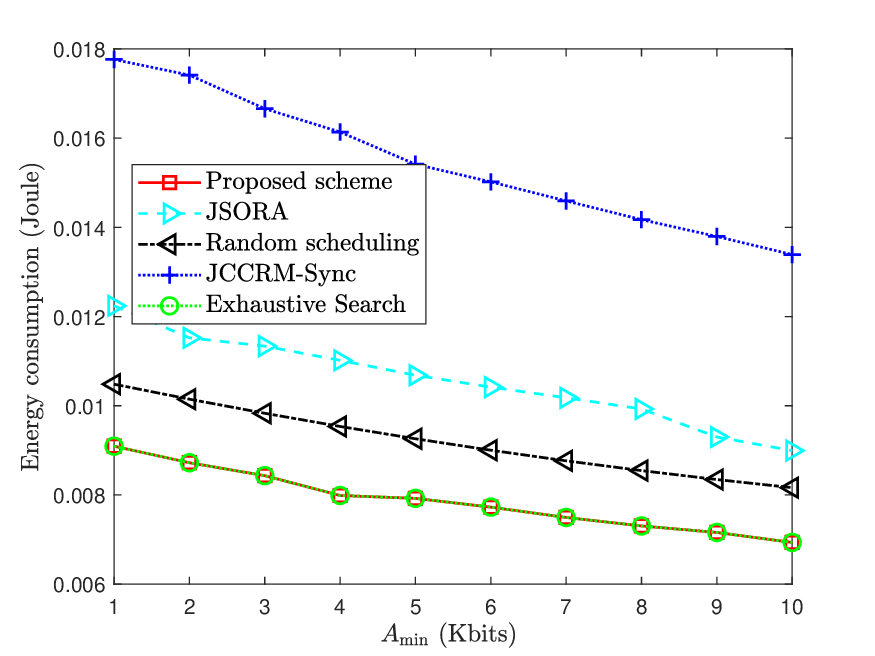}
	\caption{Energy  consumption versus minimum task size.} \label{fig_Amin}
	\vspace{-1.5em}
\end{figure}

{ 
To test the compatibility of the proposed algorithm under different task scale differences, the energy consumption versus minimum task size $A_{\min}$ is shown in Fig.~\ref{fig_Amin}, where task size obeys uniform distribution on $[A_{\min},A_{\max}]$ with fixed  mean value $\frac{A_{\min}+A_{\max}}{2}=30$ Kbits. With a large $A_{\min}$, the task scale difference is small. In Fig.~\ref{fig_Amin}, the proposed scheme and exhaustive search scheme achieve nearly the same performance,  and outperform other schemes.   
One can observe that the energy consumption increases as task scale difference gets large. This can be explained by that the resources have to be tilted towards the devices with large task sizes, thus resulting in more energy consumption.  
}

\section{Conclusion}
In this paper, we have investigated a joint user scheduling
and resource optimization framework for MEC networks with
asynchronous computing. An optimization problem of joint
user scheduling, communication and computation resource
management has been solved aiming to minimize the
energy consumption of server under the delay constraint.
Simulations verified that the proposed algorithm yields significant
performance gains compared with benchmark schemes.  This work establishes a new principle of asynchronous computing and verifies the superiority over its synchronous counterpart. 
{For future works, we will generalize the proposed  asynchronous computing framework to  heterogeneous task  deadlines scenarios so as to further activate its potential. As another direction, the extension to online algorithm design and accommodate new coming devices deserve further investigation. 
}

\appendices
\section{Proof of Proposition  \ref{proposition_1}}  \label{proof_proposition_1}
\setcounter{equation}{0}
\renewcommand\theequation{A.\arabic{equation}}
The feasibility problem of \eqref{theorem4_eq1} can be constructed as
\begin{subequations} \label{proposition_eq1}
	\begin{align}
		\min_{f_{n,m}}\quad &\max_{m \in \{2, \cdots, K+1\}} \sum_{n=1}^{m-1} f_{n,m},\tag{\ref{proposition_eq1}}\\
		s.t.\quad  
		&\sum_{m=n+1}^{K+1} f_{n,m}\Delta t_m \geq F_n, \quad \forall n\in\mathcal{K},\\
		&f_{n,m}\geq 0,\quad\forall n\in\mathcal{K},\forall m = n+1, \cdots, K+1.
	\end{align}
\end{subequations}
If the optimal objective of problem  \eqref{proposition_eq1} is less than or equal to $F_{\max}$, problem \eqref{theorem4_eq1} is feasible; otherwise, it is infeasible. Subsequently, we analyze the optimal solution of problem \eqref{proposition_eq1}. First, when $K=1$, i.e., there exists only one task, the optimal objective of problem  \eqref{proposition_eq1} is $\frac{F_K}{\Delta t_{K+1}}$. When $K=2$,  we consider two cases: 1) If $\frac{F_1}{\Delta t_2}\leq\frac{F_2}{\Delta t_3}$, this indicates that the optimal scheme is computing task $2$ after task $1$ is finished. Therefore, the optimal solution is $f_{1,2}=\frac{F_1}{\Delta t_2}$ and $f_{2,3}=\frac{F_2}{\Delta t_3}$. Hence, the optimal objective is  $\frac{F_2}{\Delta t_{3}}$. 2) If $\frac{F_1}{\Delta t_2}>\frac{F_2}{\Delta t_3}$, this implies that part of task $1$ can be processed in parallel with task $2$. Hence, the optimal solution is given by $f_{1,2}=f_{1,3}+f_{2,3}=\frac{F_1+F_2}{\Delta t_2+\Delta t_3}$. Since $\frac{F_1+F_2}{\Delta t_2+\Delta t_3}>\frac{F_2}{\Delta t_3}$, the optimal objective is $\frac{F_1+F_2}{\Delta t_2+\Delta t_3}$. In conclusion, when $K=2$, the optimal objective is $\max\{\frac{F_1+F_2}{\Delta t_2+\Delta t_3},\frac{F_2}{\Delta t_3}\}$. Similarly, by recursion, we can deduce that when there exist $K$ TD, the optimal solution is $\max\{\frac{F_1+\cdots+F_K}{\Delta t_2+\cdots+\Delta t_{K+1}},\frac{F_2+\cdots+F_K}{\Delta t_3+\cdots+\Delta t_{K+1}},\cdots,\frac{F_K}{\Delta t_{K+1}}\}$. That completes the proof.    \hfill $\blacksquare$

\section{Proof of Property 1) in Theorem \ref{theorem_5}}  \label{proof_theorem_5}
\setcounter{equation}{0}
\renewcommand\theequation{B.\arabic{equation}} 	
Before that, we give the following two corollaries to facilitate the proof. 
\begin{coro} \label{theorem1}
	\emph{[Row property] The optimal computation resource of each task is non-increasing during its computation period, i.e., $f_{n,n+1}^*\geq f_{n,n+2}^*\geq\cdots\geq f_{n,K+1}^*$ $(\forall n \in \mathcal{K})$.}
\end{coro}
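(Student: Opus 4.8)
The plan is to derive the row monotonicity from the closed form of the optimal allocation rather than by a direct exchange. Since all constraints of \eqref{theorem4_eq1} are affine, the KKT conditions are necessary at the optimum, so optimal multipliers $\{\alpha^*_m\},\{\beta^*_n\}$ exist and, by Proposition~\ref{theorem_10}, the optimizer satisfies $f^*_{n,m}=\sqrt{[\,\frac{\beta^*_n}{3\kappa}-\frac{\alpha^*_m}{3\kappa\Delta t_m}\,]^+}$ for each task $n$ and each of its slots $m\in\{n+1,\dots,K+1\}$. For fixed $n$ this is a non-increasing function of the single quantity $\frac{\alpha^*_m}{\Delta t_m}$ (and strictly decreasing wherever it is positive). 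Hence it suffices to prove that $m\mapsto\frac{\alpha^*_m}{\Delta t_m}$ is non-decreasing on $\{2,\dots,K+1\}$: then for every $n$ the slots $n+1<\dots<K+1$ give $\frac{\alpha^*_{n+1}}{\Delta t_{n+1}}\le\cdots\le\frac{\alpha^*_{K+1}}{\Delta t_{K+1}}$, hence $f^*_{n,n+1}\ge\cdots\ge f^*_{n,K+1}$, which is exactly the claim.

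To prove the monotonicity of $\frac{\alpha^*_m}{\Delta t_m}$ I would argue by contradiction on two consecutive slots. Suppose $\frac{\alpha^*_m}{\Delta t_m}>\frac{\alpha^*_{m+1}}{\Delta t_{m+1}}$ for some $m$. As $\alpha^*_{m+1}\ge 0$, this forces $\alpha^*_m>0$, so complementary slackness makes the capacity constraint (\ref{theorem4_eq1}a) at slot $m$ tight: $\sum_{j=1}^{m-1}f^*_{j,m}=F_{\max}$. Every task $j\le m-1$ is also active at slot $m+1$, and since the closed form makes $f^*_{j,m}$ non-increasing in $\frac{\alpha^*_m}{\Delta t_m}$ for fixed $j$, the assumed strict inequality gives $f^*_{j,m+1}\ge f^*_{j,m}$ for all $j\le m-1$. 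Summing, $\sum_{j=1}^{m}f^*_{j,m+1}\ge\sum_{j=1}^{m-1}f^*_{j,m+1}\ge\sum_{j=1}^{m-1}f^*_{j,m}=F_{\max}$; but (\ref{theorem4_eq1}a) at slot $m+1$ forces $\sum_{j=1}^{m}f^*_{j,m+1}\le F_{\max}$, so equality holds throughout and $f^*_{j,m+1}=f^*_{j,m}$ for every $j\le m-1$. A common positive value of $f^*_{j,m}$ and $f^*_{j,m+1}$ would, through the closed form, force $\frac{\alpha^*_m}{\Delta t_m}=\frac{\alpha^*_{m+1}}{\Delta t_{m+1}}$, contradicting the assumption; hence $f^*_{j,m}=f^*_{j,m+1}=0$ for all $j\le m-1$, which yields $\sum_{j=1}^{m-1}f^*_{j,m}=0<F_{\max}$ and contradicts tightness. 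Therefore $\frac{\alpha^*_m}{\Delta t_m}\le\frac{\alpha^*_{m+1}}{\Delta t_{m+1}}$ for all $m$, and chaining these inequalities finishes the proof.

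I should mention the alternative I would otherwise try: a pure exchange argument that, whenever $f^*_{n,m_1}<f^*_{n,m_2}$ with $m_1<m_2$, shifts an infinitesimal amount of task $n$'s computation from slot $m_2$ to slot $m_1$ while preserving $f_{n,m_1}\Delta t_{m_1}+f_{n,m_2}\Delta t_{m_2}$, strictly decreasing $\sum_m\kappa f_{n,m}^3\Delta t_m$ by strict convexity of $t\mapsto t^3$. The hard part — and the reason I would use the dual route instead — is feasibility of the shift: it is immediate only if slot $m_1$ is not saturated, whereas if $\sum_{j<m_1}f^*_{j,m_1}=F_{\max}$ one must push some other task out of slot $m_1$, which (since every active task has a tight completion constraint) must be re-absorbed elsewhere, possibly cascading through further saturated slots, and proving this cascade terminates is the delicate point. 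The complementary-slackness argument sidesteps the cascade entirely, at the mild cost of invoking Proposition~\ref{theorem_10}; the only technical care needed there is to restrict the closed form to the slots genuinely associated with each task, which is automatic since $f_{n,m}$ is a decision variable only for $m\ge n+1$.
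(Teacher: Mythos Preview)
Your argument is correct and takes a genuinely different route from the paper. The paper proves Corollary~\ref{theorem1} by a primal exchange argument: it first establishes (Lemma~\ref{lemma1}) that for any two slots and a fixed total workload, equal frequency minimizes $\kappa f^3\Delta t$, and then, for each pair of consecutive slots $t_m,t_{m+1}$, distinguishes the case where the equal-frequency allocation is feasible (Case~1) from the case where the capacity constraint at $t_{m+1}$ forces $\sum_{n}f_{n,m}>\sum_{n}f_{n,m+1}$ (Case~2), finishing with a monotonicity-of-energy argument in $\delta_n=f_{n,m}-f_{n,m+1}$. Your proof instead works on the dual side: you invoke the closed form of Proposition~\ref{theorem_10}, reduce the row property to the single claim that $m\mapsto\alpha^*_m/\Delta t_m$ is non-decreasing, and derive this by a short complementary-slackness contradiction. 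What you gain is brevity and a byproduct---your monotonicity of $\alpha^*_m/\Delta t_m$ is the weak form of property~2) in Theorem~\ref{theorem_5}, which the paper only obtains \emph{after} Corollary~\ref{theorem1}. What the paper's approach buys is independence from Proposition~\ref{theorem_10}: it uses nothing but convexity of $t\mapsto t^3$, so it would survive in settings where a clean KKT closed form is unavailable. There is no circularity in your route, since Proposition~\ref{theorem_10} follows directly from KKT/Slater and does not rely on Corollary~\ref{theorem1}.
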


\begin{proof}
	Please refer to Appendix \ref{proof_theorem1}. 
\end{proof}


\begin{coro} \label{coro_2}
	\emph{[Column property] Denote the sum computation cycles of the $n$-th offloading device in time slots $t_m$ and $t_{m+1}$ by $F_n$ for  $m=3,\cdots,K$ and $n=1,\cdots,m-1$. If $F_n>0$ holds for all $n=1,\cdots,m-1$, the optimal frequency shifts  $\delta^*_n\triangleq f^*_{n,m}-f^*_{n,m+1}$ $(n=1,\cdots,m-1)$ are either all zeros or all positive, i.e., have the coincident zero or positive  characteristics.    } 
\end{coro}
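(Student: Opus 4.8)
The plan is to read the dichotomy straight off the closed-form optimality characterization in Proposition~\ref{theorem_10}, combined with the row monotonicity of Corollary~\ref{theorem1}, so that essentially no new optimization is needed. Fix $m\in\{3,\dots,K\}$, and let $\alpha^*_2,\dots,\alpha^*_{K+1}$ and $\beta^*_1,\dots,\beta^*_K$ be optimal multipliers for, respectively, the capacity constraints and the completion constraints of \eqref{theorem4_eq1}; these exist and make the formulas below valid because \eqref{theorem4_eq1} is convex with affine constraints and, by Proposition~\ref{proposition_1}, satisfies Slater's condition whenever it is feasible. By Proposition~\ref{theorem_10},
\[
f^*_{n,m}=\sqrt{\Big[\tfrac{\beta^*_n}{3\kappa}-\tfrac{\alpha^*_m}{3\kappa\,\Delta t_m}\Big]^{+}},\qquad
f^*_{n,m+1}=\sqrt{\Big[\tfrac{\beta^*_n}{3\kappa}-\tfrac{\alpha^*_{m+1}}{3\kappa\,\Delta t_{m+1}}\Big]^{+}}
\]
for every $n\le m-1$. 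The first step is to record a consequence of the hypothesis: here $F_n$ denotes the computation $f^*_{n,m}\Delta t_m+f^*_{n,m+1}\Delta t_{m+1}$ performed on task $n$ over these two slots, so $F_n>0$ together with $f^*_{n,m}\ge f^*_{n,m+1}$ (Corollary~\ref{theorem1}) forces $f^*_{n,m}>0$, and hence $\beta^*_n>\alpha^*_m/\Delta t_m$, for \emph{every} $n\le m-1$.

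Write $\sigma':=\alpha^*_m/\Delta t_m$ and $\sigma:=\alpha^*_{m+1}/\Delta t_{m+1}$. Since each slot-$m$ bracket is positive, Corollary~\ref{theorem1} applied to any single task already forces $\sigma'\le\sigma$, so the whole problem collapses to comparing $\sigma'$ with $\sigma$. If $\sigma=\sigma'$, the two brackets coincide for every $n$ and $\delta^*_n=0$ for all $n$. If $\sigma>\sigma'$, I would argue task by task on whether $\beta^*_n>\sigma$: if so, both brackets are positive and strict monotonicity of the square root gives $f^*_{n,m+1}<f^*_{n,m}$; if not, $f^*_{n,m+1}=0<f^*_{n,m}$. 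Either way $\delta^*_n>0$, and this happens simultaneously for all $n$ precisely because $\sigma$ and $\sigma'$ are task-independent. This is the claimed ``all zero or all positive'' dichotomy.

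I expect the only point needing genuine care to be the bookkeeping around the $[\,\cdot\,]^{+}$ operator: since $f^*_{n,m+1}$ may be pinned at $0$, one cannot treat it as a smooth function of the dual price, and the sub-case $\beta^*_n\le\sigma$ is exactly what absorbs this --- short, but the place an argument could go wrong. It is also worth making explicit in the write-up that $\alpha^*_m$ and $\alpha^*_{m+1}$ are the task-\emph{independent} shadow prices of the two capacity constraints in \eqref{theorem4_eq1}, which is why the comparison of $\sigma'$ with $\sigma$ --- and therefore the dichotomy --- is uniform across $n$. As a self-contained alternative not using the dual formula, one could instead try a feasibility-preserving exchange: if $\delta^*_{n_0}>0$ while $\delta^*_{n_1}=0$, shift a small \emph{amount} of task-$n_1$ work from slot $t_m$ to $t_{m+1}$ and the same amount of task-$n_0$ work the other way, which leaves every column sum and every task total unchanged and strictly lowers the objective by strict convexity of $x\mapsto x^3$; the delicate part of that route is guaranteeing non-negativity at slot $t_{m+1}$ when $f^*_{n_0,m+1}=0$, which is exactly the subtlety the KKT argument sidesteps.
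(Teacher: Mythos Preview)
Your argument is correct, and it is genuinely different from the paper's. The paper does not invoke Proposition~\ref{theorem_10} at all here: instead it freezes $f_{m,m+1}$ and the two-slot workloads $F_n$, sets up a local convex program in the shifts $\delta_n$ (problem~\eqref{coro_P1} in Appendix~\ref{proof_coro_2}), forms its partial Lagrangian with a single multiplier $\Upsilon$ for the aggregate shift constraint $\sum_n\delta_n=\Omega$, and then derives closed-form expressions for each $\delta_n^*$ in terms of $\Upsilon$ that are manifestly positive whenever the capacity constraint is active. Your route is shorter and more conceptual: once the global KKT formula $f^*_{n,m}=\sqrt{[\beta^*_n/(3\kappa)-\alpha^*_m/(3\kappa\Delta t_m)]^+}$ is available, the dichotomy is literally the dichotomy $\sigma'=\sigma$ versus $\sigma'<\sigma$ between the two task-independent dual prices, with Corollary~\ref{theorem1} and the hypothesis $F_n>0$ ruling out the $[\,\cdot\,]^+$ degeneracy at slot $m$. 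What the paper's approach buys is an explicit formula \eqref{coro1_eq2} for $\delta_n^*$ (potentially useful downstream) and independence from the global dual; what yours buys is a two-line proof with no new optimization. There is no circularity in your use of Corollary~\ref{theorem1}, since the paper proves it by a direct energy argument (Appendix~\ref{proof_theorem1}) that does not rely on the present corollary, and you correctly re-derive $\sigma'\le\sigma$ from Corollary~\ref{theorem1} rather than citing property~2) of Theorem~\ref{theorem_5}, which would have been circular.
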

\begin{proof}
	Please refer to Appendix \ref{proof_coro_2}. 
\end{proof}
	First, applying the KKT conditions gives
	\begin{align}
		&3\kappa (f^*_{n,m})^2 \Delta t_m+\alpha_m-\beta_n\Delta t_m-\gamma_{n,m}=0,\label{theorem4_eq3}\\
		&\alpha_m\left(\sum_{n=1}^{m-1} f^*_{n,m}- F_{\max}\right)=0,\label{theorem4_eq4}\\
		&\beta_n\left(F_n-\sum_{m=n+1}^{K+1} f^*_{n,m}\Delta t_m \right)=0,\\
		&\gamma_{n,m}f^*_{n,m}=0,\quad\forall n\in\mathcal{K},\label{theorem4_eq6}\\
		&\alpha_m\!\geq\!0,\beta_n\!\geq\!0,\gamma_{n,m}\!\geq\!0.
	\end{align}
	Based on \eqref{theorem4_eq3}, we obtain that
	\begin{align} \label{theorem4_eq8}
		f^*_{n,m}=\sqrt{\frac{\beta_n}{3\kappa}+\frac{\gamma_{n,m}-\alpha_m}{3\kappa \Delta t_m}}. 
	\end{align}
	
	In case of $f_{n,m}> 0$ $(\forall n\in\mathcal{K},\forall m = n+1, \cdots, K+1)$, we have $\gamma_{n,m}=0$ according to \eqref{theorem4_eq6}. 
	Furthermore, $\beta_n>0$ is derived from \eqref{theorem4_eq8}. According to Corollary \ref{theorem1}, the optimal solution satisfies $f^*_{n,m}\geq f^*_{n,m+1}$. Thus, we have $\frac{\alpha_m}{\Delta t_m}\leq\frac{\alpha_{m+1}}{\Delta t_{m+1}}$. 
	Assume that there exists a certain  $i\in\{3,\cdots,K\}$ such that $f^*_{n,i}=f^*_{n,i+1}$. We have $\frac{\alpha_i}{\Delta t_i}=\frac{\alpha_{i+1}}{\Delta t_{i+1}}$. If $\frac{\alpha_i}{\Delta t_i}=\frac{\alpha_{i+1}}{\Delta t_{i+1}}>0$, i.e., $\alpha_i>0$ and $\alpha_{i+1}>0$, we should have $\sum_{n=1}^{i-1} f_{n,i}=\sum_{n=1}^{i} f_{n,i+1}= F_{\max}$ according to \eqref{theorem4_eq4}. 
	Furthermore, due to $f_{n,m}>0$ $(\forall n \in\mathcal{K},\forall m\in=n+1,\cdots,K+1)$, the computation cycles $f_{n,i}\Delta t_i+f_{n,i+1}\Delta t_{i+1}>0$ $(\forall n \in 1,\cdots,i-1)$.  According to Corollary \ref{coro_2}, we have $f_{n,i}=f_{n,i+1}$ $(n=1,\cdots,i-1)$. 
	Hence, it can be derived that $f_{i,i+1}=0$ which contradicts that $f_{i,i+1}$ is positive. Therefore, we have $\alpha_i=\alpha_{i+1}=0$, i.e., $\frac{\alpha_i}{\Delta t_i}=\frac{\alpha_{i+1}}{\Delta t_{i+1}}=0$. Since $\frac{\alpha_2}{\Delta t_2}\leq\cdots\leq\frac{\alpha_i}{\Delta t_i}$ and $\alpha_2,\cdots,\alpha_{i-1}\geq0$, we can further obtain that $\alpha_2=\cdots=\alpha_{i+1}=0$.  This indicates that if there exists a certain $i\in\{3,\cdots,K\}$ such that  $f^*_{n,i}=f^*_{n,i+1}>0$, we have $f^*_{n,n+1}=f^*_{n,n+2}=\cdots=f^*_{n,i+1}$. 
	
	Additionally, if there exists a certain $i< j\leq K+1$ such that $f^*_{n,j}=0$, we can deduce that $f^*_{n,j}=f^*_{n,j+1}=\cdots= f_{n,K+1}^*=0$ since $f^*_{n,j}\geq f^*_{n,j+1}\geq \cdots\geq f^*_{n,K+1}$. 
	
	Combing the above two cases, we complete the proof.  \hfill$\blacksquare$

\section{Proof of Property 2) in Theorem  \ref{theorem_5}}  \label{proof_theorem1_2)}
\setcounter{equation}{0}
\renewcommand\theequation{C.\arabic{equation}} 
According to Appendix~\ref{proof_theorem_5}, we can obtain that  $\alpha^*_2=\cdots=\alpha^*_i=0$.
	Moreover, since  $f^*_{n-1,n}>f^*_{n-1,n+1}\geq 0$  $(i\leq n \leq K)$, we have $\sqrt{\frac{\beta_{n-1}}{3\kappa}+\frac{\gamma_{n-1,n}-\alpha_n}{3\kappa \Delta t_n}}>\sqrt{\frac{\beta_n-1}{3\kappa}+\frac{\gamma_{n-1,n+1}-\alpha_{n+1}}{3\kappa \Delta t_{n+1}}}$ according to \eqref{theorem4_eq8}.   Due to that $\gamma_{n-1,n}=0$ and $\gamma_{n-1,n+1}>0$, we can deduce that $-\frac{\alpha_n}{\Delta t_n}>\frac{\gamma_{n-1,n+1}-\alpha_{n+1}}{\Delta t_{n+1}}>-\frac{\alpha_{n+1}}{\Delta t_{n+1}}$, i.e., $\frac{\alpha_n}{\Delta t_n}<\frac{\alpha_{n+1}}{\Delta t_{n+1}}$ $(i\leq n \leq K)$.   Overall, we can conclude that $0=\frac{\alpha^*_2}{\Delta t_2}=\cdots=\frac{\alpha^*_i}{\Delta t_i}<\frac{\alpha^*_{i+1}}{\Delta t_{i+1}}\cdots<\frac{\alpha^*_{K+1}}{\Delta t_{K+1}}$. \hfill$\blacksquare$

\section{Proof of Property 3) in Theorem \ref{theorem_5}}  \label{proof_theorem_8}
\setcounter{equation}{0}
\renewcommand\theequation{D.\arabic{equation}} 	
We first prove the ``only if'' part. According to property 1) in Theorem~\ref{theorem_5}, if $t_{i}$ is the transition point, we have $f_{n,n+1}^*=\cdots= f_{n,i-1}^*>f_{n,i}^*\cdots>f^*_{n,j}=\cdots= f_{n,K+1}^*=0$  $(n+1\leq i < j\leq K+1)$ and $\sum_{n=1}^{m-1}f_{n,m}=F_{\max}$ $(i\leq m\leq K+1)$. Since $\sum_{m=n+1}^{K+1}f^*_{n,m}\Delta t_{m}=F_n$, we can obtain that  $\sum_{m=i}^{K+1}f^*_{n,m}\Delta t_m<\frac{F_n(\sum_{m=i}^{K+1}\Delta t_m)}{\sum_{m=n+1}^{K+1}\Delta t_m}$ $(1\leq n\leq i-2)$. Thus, we have $(\sum_{m=i}^{K+1}\Delta t_m)F_{\max}=\sum_{n=1}^{i-2}\sum_{m=i}^{K+1}f^*_{n,m}\Delta t_m +\sum_{n=i-1}^{K}\sum_{m=n+1}^{K+1}f^*_{n,m}\Delta t_m <\sum_{n=1}^{i-2}\frac{F_n(\sum_{m=i}^{K+1}\Delta t_m)}{\sum_{m=n+1}^{K+1}\Delta t_m} + \sum_{n=i-1}^K F_n$, i.e., $F_{\max}<\digamma(i)$. 

Similarly, if $t_{i-1}$ is the transition point, we have  $F_{\max}<\digamma(i-1)$. Since $ \digamma(i)-\digamma(i-1) = \frac{F_{i-2}}{\sum_{m=i-1}^{K+1}\Delta t_m} + \frac{\sum_{n=i-1}^K F_n}{\sum_{m=i}^{K+1}\Delta t_m} - \frac{\sum_{n=i-2}^K F_n}{\sum_{m=i-1}^{K+1}\Delta t_m} = \frac{\sum_{n=i-1}^K F_n}{\sum_{m=i}^{K+1}\Delta t_m} - \frac{\sum_{n=i-1}^K F_n}{\sum_{m=i-1}^{K+1}\Delta t_m}>0$, we should have $\digamma(i-1) \leq F_{\max}<\digamma(i)$.

For the ``if'' part, if $F_{\max}<\digamma(i)$, we can deduce that $(\sum_{m=i}^{K+1}\Delta t_m)F_{\max} <\sum_{n=1}^{i-2}\frac{F_n(\sum_{m=i}^{K+1}\Delta t_m)}{\sum_{m=n+1}^{K+1}\Delta t_m} + \sum_{n=i-1}^K F_n$. Moreover, since $\sum_{n=1}^{i-2}\sum_{m=i}^{K+1}f^*_{n,m}\Delta t_m + \sum_{n=i-1}^K F_n\leq(\sum_{m=i}^{K+1}\Delta t_m)F_{\max}$, we have $\sum_{n=1}^{i-2}\sum_{m=i}^{K+1}f^*_{n,m}\Delta t_m + \sum_{n=i-1}^K F_n<\sum_{n=1}^{i-2}\frac{F_n(\sum_{m=i}^{K+1}\Delta t_m)}{\sum_{m=n+1}^{K+1}\Delta t_m} + \sum_{n=i-1}^K F_n$, i.e., $\sum_{n=1}^{i-2}\sum_{m=i}^{K+1}f^*_{n,m}\Delta t_m <\sum_{n=1}^{i-2}\frac{F_n(\sum_{m=i}^{K+1}\Delta t_m)}{\sum_{m=n+1}^{K+1}\Delta t_m}$. Since $f_{n,n+1}^*\geq \cdots\geq f_{n,i-1}^*\geq\cdots\geq f_{n,K+1}^*$  $(n+1\leq i <  K+1)$, we can obtain that $\sum_{m=i}^{K+1}f^*_{n,m}\Delta t_m\leq\frac{F_n(\sum_{m=i}^{K+1}\Delta t_m)}{\sum_{m=n+1}^{K+1}\Delta t_m}$ $(1\leq n \leq i-2)$. Therefore, to let  $\sum_{n=1}^{i-2}\sum_{m=i}^{K+1}f^*_{n,m}\Delta t_m <\sum_{n=1}^{i-2}\frac{F_n(\sum_{m=i}^{K+1}\Delta t_m)}{\sum_{m=n+1}^{K+1}\Delta t_m}$ hold, we should have $\sum_{m=i}^{K+1}f^*_{n,m}\Delta t_m<\frac{F_n(\sum_{m=i}^{K+1}\Delta t_m)}{\sum_{m=n+1}^{K+1}\Delta t_m}$ $(1\leq n \leq i-2)$. Further, it can be deduced that $\sum_{m=i-1}^{K+1}f^*_{n,m}\Delta t_m<\frac{F_n(\sum_{m=i-1}^{K+1}\Delta t_m)}{\sum_{m=n+1}^{K+1}\Delta t_m}$ $(1\leq n \leq i-3)$.

Additionally, since  $\digamma(i-1) \leq F_{\max}$, we can deduce that $ \sum_{n=1}^{i-3}\frac{F_n(\sum_{m=i-1}^{K+1}\Delta t_m)}{\sum_{m=n+1}^{K+1}\Delta t_m} + \sum_{n=i-2}^K F_n \leq (\sum_{m=i-1}^{K+1}\Delta t_m)F_{\max}$. Thus, we have
$ \sum_{n=1}^{i-3}\sum_{m=i-1}^{K+1}f^*_{n,m}\Delta t_m+ \sum_{n=i-2}^K F_n <\sum_{n=1}^{i-3}\frac{F_n(\sum_{m=i-1}^{K+1}\Delta t_m)}{\sum_{m=n+1}^{K+1}\Delta t_m} + \sum_{n=i-2}^K F_n \leq (\sum_{m=i-1}^{K+1}\Delta t_m)F_{\max}$, which indicates that the computation resource is abundant  from $t_{i-1}$ to $t_{K+1}$. Therefore, we can deduce that $f^*_{n,n+1}=\cdots=f^*_{n,i-1}$ $(n \leq i-2)$. 
Assume that $t_{\tilde i}$ $(\tilde i > i)$ is the transition point. We have $\digamma(\tilde i -1) \leq F_{\max}$. Since $\digamma(\tilde i -1) - \digamma(i) \geq 0$, $F_{\max}$ is infeasible, which breaks the assumption. Therefore, we can conclude that $t_{i}$ is the transition point. 

Combining the proofs of ``if'' and ``only if'' part, we complete the proof. \hfill$\blacksquare$

{
\section{Proof of Corollary   \ref{theorem1}}  \label{proof_theorem1}
\setcounter{equation}{0}
\renewcommand\theequation{E.\arabic{equation}} 
 To find out the optimal computation resource  allocation scheme, we first investigate the property of the most energy-efficient scheme without the maximum frequency restriction in Lemma~\ref{lemma1}, whose proof is provided in Appendix~\ref{proof_lemma1}. 

\begin{lemma} \label{lemma1}
	\emph{Regardless of  $\Delta t_{m}$,  $\Delta t_{m+1}$ and with given computation cycles $F_n$ in time slots $t_m$ and $t_{m+1}$,   scheme $f^*_{n,m}=f^*_{n,m+1}$ consumes the least energy among all the solutions satisfying 
	$f_{n,m}\Delta t_m+ f_{n,m+1}\Delta t_{m+1}=F_n$. 
 }
\end{lemma}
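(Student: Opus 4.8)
The plan is to prove Lemma~\ref{lemma1} by a direct convexity/Jensen-type argument applied to the two-slot energy cost. First I would isolate the contribution of a single task $n$ to the objective over the two time slots $t_m$ and $t_{m+1}$: this is $\kappa f_{n,m}^3 \Delta t_m + \kappa f_{n,m+1}^3 \Delta t_{m+1}$, subject to the linear completion constraint $f_{n,m}\Delta t_m + f_{n,m+1}\Delta t_{m+1} = F_n$ and $f_{n,m},f_{n,m+1}\geq 0$. Since the tasks decouple across this sub-objective once $F_n$ is fixed for each $n$, it suffices to minimize each term separately, so the proof reduces to a single-variable (or two-variable with one equality) optimization.

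Next I would parametrize the feasible set. Writing $u = f_{n,m}\Delta t_m$ and $v = f_{n,m+1}\Delta t_{m+1}$ so that $u + v = F_n$ with $u,v\geq 0$, the energy becomes $\kappa\big(u^3/\Delta t_m^2 + v^3/\Delta t_{m+1}^2\big)$. This is a strictly convex function of $(u,v)$ on the simplex $u+v=F_n$, so it has a unique minimizer, which I would locate by a Lagrange/KKT stationarity condition: $3\kappa u^2/\Delta t_m^2 = 3\kappa v^2/\Delta t_{m+1}^2$, giving $u/\Delta t_m = v/\Delta t_{m+1}$ (taking the nonnegative root), i.e. $f_{n,m} = f_{n,m+1}$. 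Because the objective is strictly convex and the constraint set is a compact line segment, this stationary point is the global minimizer, and the interior solution is automatically feasible (nonnegative) since $F_n > 0$ forces a strictly positive common value. I would note this holds irrespective of the actual values of $\Delta t_m$ and $\Delta t_{m+1}$, which is exactly the ``regardless of'' clause in the statement.

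Finally I would assemble the pieces: summing the per-task minimized energies over all $n$ active in slots $t_m,t_{m+1}$ shows that the allocation $f^*_{n,m}=f^*_{n,m+1}$ simultaneously minimizes every summand, hence minimizes the total two-slot energy among all allocations meeting the completion constraints $f_{n,m}\Delta t_m + f_{n,m+1}\Delta t_{m+1} = F_n$. The main obstacle—though a mild one—is being careful that the equal-frequency solution is genuinely feasible and optimal even when one of $\Delta t_m, \Delta t_{m+1}$ is very small (so the common frequency is large); here one must either acknowledge that Lemma~\ref{lemma1} deliberately omits the $F_{\max}$ cap (as the statement says ``without the maximum frequency restriction''), or observe that the cap is handled separately in the subsequent corollaries. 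A secondary subtlety is the boundary case $\Delta t_m$ or $\Delta t_{m+1}$ equal to zero, which I would dispatch by noting that a zero-duration slot simply forces all of $F_n$ into the other slot, and the ``equal frequency'' claim degenerates gracefully (or one restricts to $\Delta t_m,\Delta t_{m+1}>0$). Beyond that, the argument is a routine strict-convexity minimization.
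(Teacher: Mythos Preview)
Your proof is correct, but it follows a different route from the paper's. You use the change of variables $u=f_{n,m}\Delta t_m$, $v=f_{n,m+1}\Delta t_{m+1}$ and invoke strict convexity of $u^3/\Delta t_m^2+v^3/\Delta t_{m+1}^2$ on the segment $u+v=F_n$, then read off the unique stationary point $u/\Delta t_m=v/\Delta t_{m+1}$ from the Lagrange condition. The paper instead parametrizes by the \emph{difference} $\delta_n=f_{n,m}-f_{n,m+1}$, writes the energy explicitly as a cubic in $\delta_n$, computes the derivative, and does a three-way case split on the sign of $\Delta t_m-\Delta t_{m+1}$ to locate the minimum at $\delta_n=0$. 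Your argument is shorter and avoids the case analysis entirely; the paper's approach, while more laborious, produces explicit formulas for $E_n(F_n,\delta_n)$ and its derivative that are reused verbatim in the proofs of Corollaries~\ref{theorem1} and~\ref{coro_2}, so its extra work is not wasted. One minor remark: the lemma is stated for a single task $n$, so your final ``assemble the pieces'' paragraph about summing over tasks is not needed here (though it is harmless).
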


For Corollary~\ref{theorem1}, we first prove that $f_{1,2}^*\geq f_{1,3}^*$ with given computation cycle $F_1$ and $f_{2,3}$.  Denote the sum computation cycles in $t_2$ and $t_3$ of the first offloading device as $F_1$. Through relaxing the maximum computation resource constraint, the energy consumption is the least when $f_{1,2}=f_{1,3}=\frac{F_1}{\Delta t_2+\Delta t_3}$ according to Lemma  \ref{lemma1}. 
Since constraint (\ref{theorem4_eq1}a) should be satisfied, we have 
\begin{align} \label{lemma1_eq3}
	f_{1,2}\leq F_{\max}, \quad
	f_{1,3}+f_{2,3}\leq F_{\max}.  
\end{align}
We consider the following two cases; otherwise, $f_{1,2}$ and $f_{1,3}$ have no feasible solution with the given $F_1$. 

\textit{Case 1: $\frac{F_1}{\Delta t_2+\Delta t_3}+f_{2,3}\leq F_{\max}$.} In this case, we can deduce that $f_{1,2}=f_{1,3}=\frac{F_1}{\Delta t_2+\Delta t_3}$ satisfies \eqref{lemma1_eq3}. Since $f_{1,2}=f_{1,3}=\frac{F_1}{\Delta t_2+\Delta t_3}$ is the most energy efficient solution, the optimal solution in this case is  $f^*_{1,2}=f^*_{1,3}=\frac{F_1}{\Delta t_2+\Delta t_3}$.

\textit{Case 2: $\frac{F_1}{\Delta t_2+\Delta t_3}+f_{2,3}> F_{\max}$.} Obviously,  $f_{1,2}=f_{1,3}=\frac{F_1}{\Delta t_2+\Delta t_3}$ is infeasible in this case. We then prove that $f_{1,2}<f_{1,3}$ is also impossible. Since $f_{1,2}\Delta t_2+f_{1,3}\Delta t_3=F_1$, $f_{1,2}$ increases as $f_{1,3}$ decreases.  If $f_{1,2}<f_{1,3}$, we can deduce that $f_{1,3}>\frac{F_1}{\Delta t_2+\Delta t_3}$. Therefore, we have  $f_{1,3}+f_{2,3}>\frac{F_1}{\Delta t_2+\Delta t_3}+f_{2,3}>F_{\max}$ which violates the maximum frequency constraint. As a consequence, the optimal solution is  $f_{1,2}^*>f_{1,3}^*$.     According to Lemma \ref{lemma1}, the energy consumption $E$ increases with $\delta_1=f_{1,2}-f_{1,3}$ in the considered region $0<\delta_1\leq F_1/\Delta t_2$. Therefore, in order to achieve the fewest energy consumption, we should let $\delta_1$ as small as possible. Hence, we can obtain that  $f^*_{1,3}=F_{\max}-f_{2,3}$ and  $f^*_{1,2}=\frac{F_1-f^*_{1,3}\Delta t_3}{\Delta t_2}$. The corresponding optimal $\delta_1^*=\frac{F_1-(F_{\max}-f_{2,3})(\Delta t_2+\Delta t_3)}{\Delta t_2}$.

Summarizing the above two cases, we can obtain that $f^*_{1,2}\geq f^*_{1,3}$. Subsequently, we prove that in $t_m$ and $t_{m+1}$ $(m=3,\cdots,K)$, we always have $f^*_{n,m}\geq f^*_{n,m+1}$ for all $n=1,\cdots,m-1$. 

Denote the sum computation cycles of the $n$-th offloading device in time slots $t_m$ and $t_{m+1}$ by $F_n$, i.e.,  
\begin{align} \label{lemma1_3.5}
	\left\{\begin{aligned}
		&f_{1,m}\Delta t_m+f_{1,m+1}\Delta t_{m+1}=F_1,  \\		&\quad\quad\quad\quad\quad\cdots\\
		&f_{n,m}\Delta t_m+f_{n,m+1}\Delta t_{m+1}=F_n,  \\
		&\quad\quad\quad\quad\quad\cdots\\
		&f_{m-1,m}\Delta t_m+f_{m-1,m+1}\Delta t_{m+1}=F_{m-1}.
	\end{aligned}
	\right.
\end{align}
According to Lemma \ref{lemma1}, when $f_{n,m}=f_{n,m+1}=\frac{F_n}{\Delta t_m+\Delta t_{m+1}}$ for all $n=1,\cdots,m-1$, the minimum energy consumption of the $n$-th offloading device can be achieved, thus the total energy consumption is minimum. Moreover, the following constraints should be satisfied:
\begin{align} \label{lemma1_eq4}
	\sum_{n=1}^{m-1}f_{n,m}\leq F_{\max}, \quad
	\sum_{n=1}^{m}f_{n,m+1}\leq F_{\max}.
\end{align}
We consider two cases. 

\textit{Case 1: $\sum_{n=1}^{m-1} \frac{F_n}{\Delta t_m+\Delta t_{m+1}} + f_{m,m+1}\leq F_{\max}$.} In this case, we can deduce that $f_{n,m}=f_{n,m+1}=\frac{F_n}{\Delta t_m+\Delta t_{m+1}}$ for all $n=1,\cdots,m-1$ satisfies \eqref{lemma1_eq4}. Thus, the optimal solution in this case is $f^*_{n,m}=f^*_{n,m+1}=\frac{F_n}{\Delta t_m+\Delta t_{m+1}}$ for all $n=1,\cdots,m-1$. 

\textit{Case 2: $\sum_{n=1}^{m-1} \frac{F_n}{\Delta t_m+\Delta t_{m+1}} + f_{m,m+1}>  F_{\max}$.} Obviously,  $f_{n,m}=f_{n,m+1}=\frac{F_n}{\Delta t_m+\Delta t_{m+1}}$ is infeasible in this case. We then prove that $\sum_{n=1}^{m-1}f^*_{n,m}> \sum_{n=1}^{m-1}f^*_{n,m+1}$ by contradiction. By summing all the equalities in \eqref{lemma1_3.5}, we have $\Delta t_m\sum_{n=1}^{m-1}f_{n,m}+\Delta t_{m+1}\sum_{n=1}^{m-1}f_{n,m+1}=\sum_{n=1}^{m-1}F_n$. Thus, $\sum_{n=1}^{m-1}f_{n,m}$ is negatively correlated with $\sum_{n=1}^{m-1}f_{n,m+1}$. If $\sum_{n=1}^{m-1}f_{n,m}< \sum_{n=1}^{m-1}f_{n,m+1}$, it can be inferred that  $\sum_{n=1}^{m-1}f_{n,m+1}>\sum_{n=1}^{m-1}\frac{F_n}{\Delta t_m+\Delta t_{m+1}}$. We can further have $\sum_{n=1}^{m-1}f_{n,m+1}+f_{m,m+1}>\sum_{n=1}^{m-1}\frac{F_n}{\Delta t_m+\Delta t_{m+1}}+f_{m,m+1}>F_{max}$ which violates constraint \eqref{lemma1_eq4}. Similarly, if $\sum_{n=1}^{m-1}f_{n,m}= \sum_{n=1}^{m-1}f_{n,m+1}$, we can obtain that $\sum_{n=1}^{m-1}f_{n,m}= \sum_{n=1}^{m-1}f_{n,m+1}=\sum_{n=1}^{m-1}\frac{F_n}{\Delta t_m+\Delta t_{m+1}}$. Hence, we have $\sum_{n=1}^{m-1}f_{n,m+1}+f_{m,m+1}=\sum_{n=1}^{m-1}\frac{F_n}{\Delta t_m+\Delta t_{m+1}}+f_{m,m+1}>F_{max}$ which breaks constraint \eqref{lemma1_eq4}.
As a consequence, the optimal solution in this case satisfies $\sum_{n=1}^{m-1}f^*_{n,m}> \sum_{n=1}^{m-1}f^*_{n,m+1}$.

Next, we prove that $f^*_{n,m}\geq f^*_{n,m+1}$ for all $n=1,\cdots,m-1$.  With given $\Delta t_m$ and $\Delta t_{m+1}$, we denote the energy consumption of the $n$-th offloading device by $E_n(F_n,\delta_{n})$, where $\delta_{n}=f_{n,m}-f_{n,m+1}$. According to \eqref{lemma1_eq1.2},  $E_n(F_n,\delta_{n})$ is expressed by
\begin{align}
	&E_n(F_n,\delta_{n})=\kappa\frac{(F_n+\delta_{n}\Delta t_{m+1})^3}{(\Delta t_m+\Delta t_{m+1})^3}\Delta t_m\nonumber\\
 &+\kappa\frac{(F_n-\delta_{n}\Delta t_{m})^3}{(\Delta t_m+\Delta t_{m+1})^3}\Delta t_{m+1},  (n=1,\cdots,m-1),
\end{align}  
which decreases when $-F_n/\Delta t_{m+1}\leq\delta_{n}\leq0$ while increases when $0<\delta_{n}\leq F_n/\Delta t_m$. 

Furthermore, since $\sum_{n=1}^{m-1}f_{n,m+1}\leq F_{\max}-f_{m,m+1}$, we have 
\begin{align}\label{lemma1_eq5}
	&\sum_{n=1}^{m-1}f_{n,m}-\sum_{n=1}^{m-1}f_{n,m+1}\nonumber\\
 &\geq \frac{\sum_{n=1}^{m-1}F_n-\Delta t_{m+1}\sum_{n=1}^{m-1}f_{n,m+1}}{\Delta t_m}-\sum_{n=1}^{m-1}f_{n,m+1}\nonumber\\
	&\geq \frac{\sum_{n=1}^{m-1}F_n-(\Delta t_m+\Delta t_{m+1})\left(F_{\max}-f_{m,m+1}\right)}{\Delta t_m}.
\end{align}
According to \eqref{lemma1_eq5}, we have $\sum_{n=1}^{m-1}\delta_{n}\geq \frac{\sum_{n=1}^{m-1}F_n-(\Delta t_m+\Delta t_{m+1})\left(F_{\max}-f_{m,m+1}\right)}{\Delta t_m}\triangleq\Omega$. Next, we utilize contradiction to prove that $\delta_{n}^*\geq0$ for all $n=1,\cdots,m-1$. Assume in the optimal solution $\delta^*_{n}$ there exists a certain $\delta_{n}<0$. We can suitably decrease other positive $\delta^*_{n}$ and increase the negative $\delta_{n}<0$ to zero while keeping $\sum_{n=1}^{m-1}\delta_{n}$ unchanged. In this case, the total energy consumption is effectively reduced, which contradicts the optimality. That completes the proof of $\delta^*_{n}\geq0$, i.e., $f^*_{n,m}\geq f^*_{n,m+1}$ for all $n=1,\cdots,m-1$. 

In summary, since we have proven $f_{1,2}^*\geq f_{1,3}^*$ and $f_{n,m}^*\geq f_{n,m+1}^*$ for $m=3,\cdots,K$ and $n=1,\cdots,m-1$, we can deduce Corollary~\ref{theorem1}.  \hfill$\blacksquare$

	\section{Proof of Corollary  \ref{coro_2}}  \label{proof_coro_2}
	\setcounter{equation}{0}
	\renewcommand\theequation{F.\arabic{equation}} 
	In case of $\sum_{n=1}^{m-1} \frac{F_n}{\Delta t_m+\Delta t_{m+1}} + f_{m,m+1}\leq F_{\max}$, the optimal $\delta^*_n$ $(n=1,\cdots,m-1)$ are all zeros according to Corollary~\ref{theorem1}. Therefore, we only need to justify the case of $\sum_{n=1}^{m-1} \frac{F_n}{\Delta t_m+\Delta t_{m+1}} + f_{m,m+1}> F_{\max}$. 
	In this case, we first prove that the optimal $\sum_{n=1}^{m-1}\delta^*_{n}=\Omega$. Assume that the optimal $\sum_{n=1}^{m-1}\delta^*_{n}>\Omega$. We can suitably reduce the positive $\delta^*_{n}$ such that the energy consumption is further reduced, which contradicts the optimality. Therefore,  we can construct the following energy consumption minimization problem:
	\begin{subequations} \label{coro_P1}
		\begin{align}
			\min_{\pmb{\delta}} \quad & \sum_{n=1}^{m-1} E_n(F_n,\delta_{n}),\tag{\ref{coro_P1}}\\
			s.t.\quad 
			& \sum_{n=1}^{m-1} \delta_{n} = \Omega, \\
			&0\leq \delta_{n} \leq F_n/\Delta t_m,\quad \forall n = 1,\cdots,m-1,
		\end{align}
	\end{subequations}
	where $\pmb{\delta}=[\delta_{1},\cdots,\delta_{m-1}]^T$.

	Based on \eqref{lemma1_eq2}, the second derivative of $E_n(F_n,\delta_{n})$ with respect to $\delta_{n}$ is given by
	\begin{align}
		\frac{\mathrm{d}^2E_n(F_n,\delta_{n})}{\mathrm{d}(\delta_{n})^2}&\!=\!\frac{6\kappa \Delta t_m\Delta t_{m+1}}{(\Delta t_m+\Delta t_{m+1})^2}\left[\delta_{n}(\Delta t_{m+1}\!-\!\Delta t_m)\!+\!F_n\right]. 
	\end{align}
	We can infer that the second derivative of $E_n(F_n,\delta_{n})$ is always positive in the considered region 
	$-F_n/\Delta t_{m+1}\leq\delta_{n}\leq F_n/\Delta t_m$, no matter $\Delta t_m$ is larger than or smaller than, or equal to $\Delta t_{m+1}$. Hence,  $E_n(F_n,\delta_{n})$ is convex with respect to $\delta_{n}$. Thus, problem \eqref{coro_P1} is convex. The partial Lagrangian function of this problem is expressed as
	\begin{subequations} \label{coro_P2}
		\begin{align}
			\min_{\pmb{\delta}} \quad & \sum_{n=1}^{m-1} E_n(F_n,\delta_{n}) + \Upsilon\left(\sum_{n=1}^{m-1} \delta_{n} - \Omega\right),\tag{\ref{coro_P2}}\\
			s.t.\quad 
			&0\leq \delta_{n} \leq F_n/\Delta t_m,\quad \forall n = 1,\cdots,m-1,
		\end{align}
	\end{subequations}
	where $\Upsilon$ is the dual variable with respect to constraint (\ref{coro_P1}a). Problem \eqref{coro_P2} can be decomposed into a series of $(m-1)$ parallel problems: 
	\begin{subequations} \label{coro_P3}
		\begin{align}
			\min_{\delta_{n}} \quad &  E_n(F_n,\delta_{n}) + \Upsilon  \delta_{n},\tag{\ref{coro_P3}}\\
			s.t.\quad 
			&0\leq \delta_{n} \leq F_n/\Delta t_m,
		\end{align}
	\end{subequations}
	Denote the objective of \eqref{coro_P3} by $J_n$. Taking the derivative of $J_n$ with respect to $\delta_{n}$, we have 
	\begin{align} \label{coro1_eq1}
		&\frac{\mathrm{d} J_n}{\mathrm{d} \delta_{n}}=\frac{\mathrm{d} E_n(F_n,\delta_{n})}{\mathrm{d}  \delta_{n}}+\Upsilon,\nonumber\\
		&=\frac{3\kappa \Delta t_m\Delta t_{m+1}}{(\Delta t_m+\Delta t_{m+1})^2}\delta_{n}\left(\delta_{n}(\Delta t_{m+1}\!-\!\Delta t_m)\!+\!2F_n\right)\!+\!\Upsilon.
	\end{align}
	It can be deduced that $\frac{\mathrm{d} E_n(F_n,\delta_{n})}{\mathrm{d} \delta_{n}}$ is non-negative when $0\leq \delta_{n} \leq F_n/\Delta t_m$. 
	If $\Upsilon\geq 0$, we have $\frac{\mathrm{d} J_n}{\mathrm{d} \delta_n}>=0$. Thus, the optimal solution is achieved when $\delta_{n}=0$ for all $n=1,\cdots,m-1$, which contradicts (\ref{coro_P1}a). Hence, we should have $\Upsilon<0$. Due to $\frac{\mathrm{d}^2 J_n}{\mathrm{d} (\delta_{n})^2}=\frac{d^2 E_n}{d (\delta_{n})^2}>0$ in the region of $[0,F_n/\Delta t_m]$, we can obtain that $\frac{\mathrm{d} J_n}{\mathrm{d} \delta_n}$ monotonously increases.   Moreover, we have $\frac{\mathrm{d} J_n}{\mathrm{d} \delta_{n}}|_{\delta_{n}=0}=\Upsilon<0$.
	Therefore, we consider the following two cases.
	
	\textit{Case 1: $\frac{\mathrm{d} J_n}{\mathrm{d} \delta_{n}}|_{\delta_{n}=F_n/\Delta t_m}\leq0$, i.e., $\Upsilon\leq -\frac{3\kappa \Delta t_{m+1}F_n^2}{(\Delta t_m+\Delta t_{m+1})\Delta t_m}$.} In this case, $\frac{\mathrm{d} J_n}{\mathrm{d} \delta_{n}}<=0$ in the region of $0\leq \delta_{n} \leq F_n/\Delta t_m$. Therefore, the optimal solution is $\delta_{n}^*=F_n/\Delta t_m$. 
	
	\textit{Case 2: $\frac{\mathrm{d} J_n}{\mathrm{d} \delta_{n}}|_{\delta_{n}=F_n/\Delta t_m}>0$, i.e., $\Upsilon> -\frac{3\kappa \Delta t_{m+1}F_n^2}{(\Delta t_m+\Delta t_{m+1})\Delta t_m}$.} In this case, $\frac{\mathrm{d} J_n}{\mathrm{d} \delta_{n}}$ has a null point in the region of $0\leq \delta_{n} \leq F_n/\Delta t_m$. Thus, $J_n$ decreases first and then increases. Through solving $\frac{\mathrm{d} J_n}{\mathrm{d} \delta_{n}}=0$, we obtain that
 \begin{small}
	\begin{align} \label{coro1_eq2}
		\delta_{n}^*=\left\{\begin{aligned}
			&\frac{-F_n+\sqrt{F_n^2-\Upsilon\Xi(\Delta t_{m+1}-\Delta t_m)}}{\Delta t_{m+1}-\Delta t_m}, \text{if }\Delta t_{m}<\Delta t_{m+1}\\
            &\quad\quad\quad\quad\quad\quad\quad\quad\quad\quad\quad\quad\quad\quad\quad\ \ \text{ or } \Delta t_m>\Delta t_{m+1},\\
			&-\frac{\Upsilon\Xi}{2F_n}, \quad\quad\quad\quad\quad\quad\quad\quad\quad\quad\quad\quad \  \text{if }\Delta t_{m}=\Delta t_{m+1},
		\end{aligned} \right.
	\end{align}\end{small}\\
 where $\Xi=\frac{(\Delta t_m+\Delta t_{m+1})^2}{3\kappa \Delta t_m\Delta t_{m+1}}$. Meanwhile, the optimal $\Upsilon^*$ should satisfy constraint (\ref{coro_P1}a). 
	Obviously, both the above two cases satisfy $\delta^*_n>0$ $(n=1,\cdots,m-1)$, completing the proof.  \hfill$\blacksquare$
}

	\section{Proof of Lemma   \ref{lemma1}}  \label{proof_lemma1}
\setcounter{equation}{0}
\renewcommand\theequation{G.\arabic{equation}} 
	 \begin{figure} 
	\centering 
	\subfigure[$\Delta t_m>\Delta t_{m+1}$]{\label{}
		\includegraphics[height=1.5in]{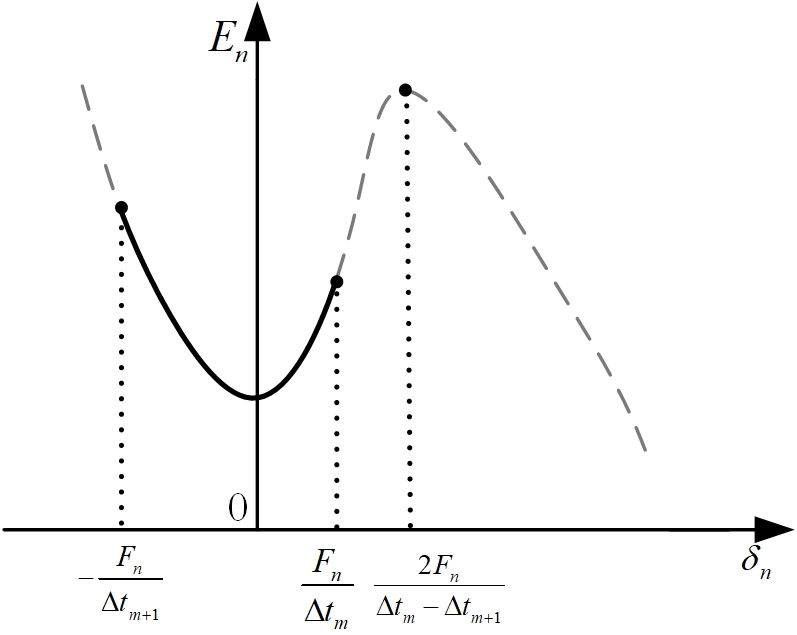}}
	\subfigure[$\Delta t_m<\Delta t_{m+1}$]{\label{}
		\includegraphics[height=1.5in]{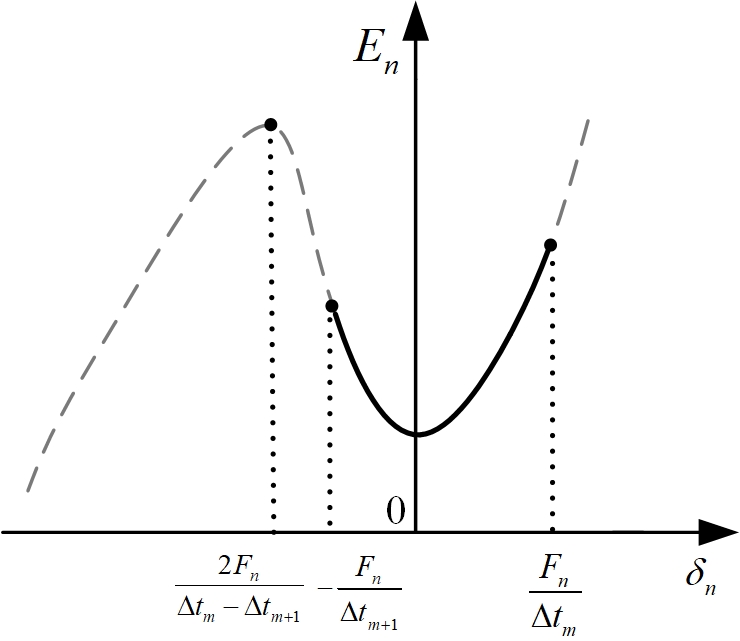}}
	\subfigure[$\Delta t_m=\Delta t_{m+1}$]{\label{}
		\includegraphics[height=1.5in]{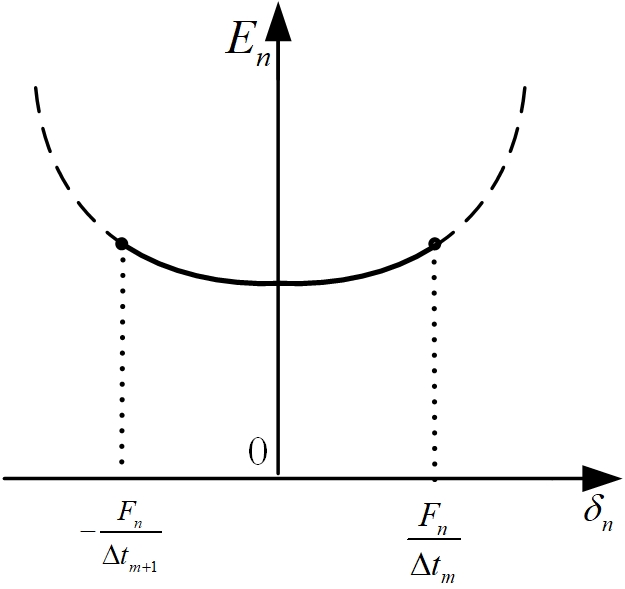}}
	\caption{Illustrations of three cases for the analysis of energy consumption in Appendix~\ref{proof_lemma1}.}
	\label{En_fig}\vspace{-3ex}
\end{figure}

Denote $\delta_n= f_{n,m}- f_{n,m+1}$. Since $ f_{n,m}$ and $f_{n,m+1}$ should be larger than or equal to zero, we can deduce that $-F_n/\Delta t_{m+1}\leq\delta_n\leq F_n/\Delta t_m$. Hence, we can obtain that
\begin{align}
	f_{n,m}=\frac{F_n+\delta_n \Delta t_{m+1}}{\Delta t_m+\Delta t_{m+1}},\ 
	f_{n,m+1}=\frac{F_n-\delta_n \Delta t_{m}}{\Delta t_m+\Delta t_{m+1}}.
\end{align}
Therefore, the energy consumption in time slot $t_m$ and $t_{m+1}$ can be given by
\begin{align} \label{lemma1_eq1.2}
	E_n&=\kappa f_{n,m}^3\Delta t_m+\kappa f_{n,m+1}^3\Delta t_{m+1},\nonumber\\
	&=\kappa\frac{(F_n+\delta_n\Delta t_{m+1})^3}{(\Delta t_m+\Delta t_{m+1})^3}\Delta t_m+\kappa\frac{(F_n-\delta_n\Delta t_{m})^3}{(\Delta t_m+\Delta t_{m+1})^3}\Delta t_{m+1}. 
\end{align}
Taking the first derivative of $E_n$ with respect to $\delta_n$, we have
\begin{align}\label{lemma1_eq2}
	\frac{\mathrm{d}E_n}{\mathrm{d}\delta_n}
	&=\frac{3\kappa \Delta t_m\Delta t_{m+1}}{(\Delta t_m+\Delta t_{m+1})^2}\delta_n\left(\delta_n(\Delta t_{m+1}-\Delta t_m)+2F_n\right). 
\end{align}
Equation \eqref{lemma1_eq2} has two null points: $0$ and $\frac{2F_n}{\Delta t_m-\Delta t_{m+1}}$. We consider the following three cases. 

\textit{Case 1: $\Delta t_m>\Delta t_{m+1}$.}   In this case, we have $0<\frac{2F_n}{\Delta t_m-\Delta t_{m+1}}$. The energy consumption decreases when $\delta_n<0$ and $\delta_n>\frac{2F_n}{\Delta t_m-\Delta t_{m+1}}$ while increases when $0\leq\delta_n\leq\frac{2F_n}{\Delta t_m-\Delta t_{m+1}}$, as shown in Fig.~\ref{En_fig}(a). Since we can easily prove that $\frac{2F_n}{\Delta t_m-\Delta t_{m+1}}>F_n/\Delta t_m$, the minimum energy consumption is obtained when $\delta_n=0$, i.e., $f^*_{n,m}=f^*_{n,m+1}$.  

\textit{Case 2: $\Delta t_m<\Delta t_{m+1}$.}   In this case, we have $\frac{2F_n}{\Delta t_m-\Delta t_{m+1}}<0$. The energy consumption increases when  $\delta_n<\frac{2F_n}{\Delta t_m-\Delta t_{m+1}}$ and $\delta_n>0$ while decreases when $\frac{2F_n}{\Delta t_m-\Delta t_{m+1}}\leq\delta_n\leq0$, as shown in Fig.~\ref{En_fig}(b). Similarly, since we can prove that $\frac{2F_n}{\Delta t_m-\Delta t_{m+1}}<-F_n/\Delta t_{m+1}$, the minimum energy consumption is obtained when $\delta_n=0$.  

\textit{Case 3: $\Delta t_m=\Delta t_{m+1}$.}   In this case, two null points  coincide, i.e., $\frac{2F_n}{\Delta t_m-\Delta t_{m+1}}=0$. Therefore, the energy consumption decreases when  $\delta_n<0$ while increases when $\delta_n\geq0$, as shown in Fig.~\ref{En_fig}(c). $\delta_n=0$ is the solution that minimizes energy consumption. 

In summary, the energy consumption when $f^*_{n,m}=f^*_{n,m+1}=\frac{F_n}{\Delta t_m+\Delta t_{m+1}}$ is the most energy efficient solution.  \hfill$\blacksquare$

	\section{Proof of Proposition   \ref{proposition_t}}  \label{proof_proposition_t}
\setcounter{equation}{0}
\renewcommand\theequation{H.\arabic{equation}} 	
	According to Theorem~\ref{theorem_5}, we have $x_{i-1,i}>0$ $(\forall i \in \{2,\cdots,K+1\})$. Hence, according to \eqref{Lag_i}, we have $\xi^* - \sum_{n=i+1}^K\rho^*_n h_{\pi_n} \eta P_0  -\omega^*_iF_{\max}=\sum_{n=1}^{i-1}2\kappa\frac{(x_{n,i})^3}{(\Delta t^*_i)^3}+\rho^*_i\frac{2\lambda(A_{\pi_i})^3}{h_{\pi_i}(\Delta t^*_{i})^{3}}>0$. Therefore, it can be derived that $\Delta t^*_{i}=\sqrt[3]{\frac{2\kappa h_{\pi_i}\sum_{n=1}^{i-1}(x_{n,i})^3+2\rho^*_{i}\lambda (A_{\pi_i})^3}{\xi^*h_{\pi_i} - \sum_{n=i+1}^K\rho^*_n (h_{\pi_n})^2 \eta P_0  -\omega^*_ih_{\pi_i}F_{\max}}}$ $(\forall i \in \{2,\cdots,K-1\})$. Similarly, based on \eqref{Lag_K} and \eqref{Lag_K+1}, we have  $\Delta t^*_K=\sqrt[3]{\frac{2\kappa h_{\pi_K}\sum_{n=1}^{K-1}(x_{n,K})^3+2\rho_K^*\lambda(A_{\pi_K})^3}{\xi^*h_{\pi_K}-\omega^*_Kh_{\pi_K}F_{\max}}}$ and 
$\Delta t_{K+1}^*=\sqrt[3]{\frac{2\kappa \sum_{n=1}^K(x_{n,K+1})^3}{\xi^*-\omega^*_{K+1}F_{\max}}}$.

Besides, according to \eqref{Lag_K+1}, we have $\xi^*=\sum_{n=1}^{K}2\kappa\frac{(x_{n,K+1})^3}{(\Delta t^*_{K+1})^3} +\omega^*_{K+1}F_{\max}>0$ since $\sum_{n=1}^K(x_{n,K+1})^3>(x_{K,K+1})^3=(A_{\pi_K}I_{\pi_K})^3>0$. Therefore, we can obtain that $\sum_{i=0}^{K+1}\Delta t^*_i=T$. Furthermore, based on \eqref{Lag_1}, we have $\xi^*=\sum_{n=1}^K\rho^*_nh_{\pi_n}\eta P_0$. Due to that $\xi^*$ is positive, there exists at least an $n\in\mathcal{K}$ such that $\rho_n^*>0$. This indicates that for energy causality constraints (\ref{P2}a), at least a device is run out of energy after offloading, i.e., this device uses all the harvested energy for transmission. Substituting \eqref{Lag_0} into \eqref{Lag_1}, we have $\rho^*_1\frac{2\lambda(A_{\pi_1})^3}{h_{\pi_1}(\Delta t^*_{1})^{3}} = \rho^*_1 h_{\pi_1} \eta P_0$. If $\rho_1^*>0$, we can deduce that $\Delta t^*_1=\sqrt[3]{\frac{2\lambda}{(h_{\pi_1})^2\eta P_0}}A_{\pi_1}$. Moreover, since $\frac{\lambda(A_{\pi_1})^3}{h_{\pi_1}(\Delta t_{1})^{2}}= \Delta t_0h_{\pi_1}\eta P_0$, we have $\Delta t_0^*=\sqrt[3]{\frac{\lambda}{4(h_{\pi_1})^2\eta P_0}}A_{\pi_1}$. If $\rho^*_1=0$,  $\frac{\mathrm{d}\mathcal{L}}{\mathrm{d}\Delta t^*_1}=0$ can be  guaranteed for arbitrary $\Delta t_1$ satisfying $\frac{\lambda(A_{\pi_1})^3}{h_{\pi_1}(\Delta t_{1})^{2}}\leq \Delta t_0h_{\pi_1}\eta P_0$. 
That means any pairs of $\Delta t_0$ and $\Delta t_1$ satisfying $\frac{\lambda(A_{\pi_1})^3}{h_{\pi_1}(\Delta t_{1})^{2}}\leq \Delta t_0h_{\pi_1}\eta P_0$ and $\Delta t_0+\Delta t_1=T-\sum_{i=2}^{K+1}\Delta t^*_i$ are the optimal solutions. 
Hence, according to $\Delta t_0+\Delta t_1=T-\sum_{i=2}^{K+1}\Delta t_i$, we should have $\frac{\lambda(A_{\pi_1})^3}{h_{\pi_1}(\Delta t_{1})^{2}}\leq \left(T-\sum_{i=2}^{K+1}\Delta t_i-\Delta t_1\right)h_{\pi_1}\eta P_0$, i.e., $\Psi(\Delta t_1)\triangleq\left(T-\sum_{i=2}^{K+1}\Delta t_i-\Delta t_1\right)(\Delta t_1)^2(h_{\pi_1})^2\eta P_0-\lambda(A_{\pi_1})^3\geq0$. Taking the derivative of $\Psi(\Delta t_1)$, we have $\Psi'(\Delta t_1)=\Delta t_1\left(2T-2\sum_{i=2}^{K+1}\Delta t_i-3\Delta t_1\right)(h_{\pi_1})^2\eta P_0$, which has two null points $\Delta t_1=0$ and $\Delta t_1=\frac{2}{3}\left(T-\sum_{i=2}^{K+1}\Delta t_i\right)$. Thus, $\Psi(\Delta t_1)$ increases in the region of $\Big(0,\frac{2}{3}\left(T-\sum_{i=2}^{K+1}\Delta t_i\right)\Big]$ and decreases in $\left(\frac{2}{3}\left(T-\sum_{i=2}^{K+1}\Delta t_i\right),\left(T-\sum_{i=2}^{K+1}\Delta t_i\right)\right)$. Additionally, we can obtain that $\Psi(0)=\Psi\left(T-\sum_{i=2}^{K+1}\Delta t_i\right)$. Therefore, we can choose $\Delta t^*_1$ as the unique null point of $\Psi(\Delta t_1)$ in the range of $\Delta t_1\in\Big(0,\frac{2}{3}\left(T-\sum_{i=2}^{K+1}\Delta t_i\right)\Big]$ without loss of generality. 
\hfill$\blacksquare$

	\bibliography{IEEEabrv,Ref}

\end{document}